\tikzset{node distance=1.5cm, 
            every state/.style={ 
                  semithick,
                  minimum size=7mm},
            initial text={}, 
            double distance=2pt, 
            every edge/.style={ 
                   draw,
                   ->,>=stealth', 
                   auto,
            semithick}}
\let\epsilon\varepsilon
\newtheorem{theorem}{Theorem}[section]
\newtheorem{lemma}[theorem]{Lemma}
\newtheorem{proposition}[theorem]{Proposition}
\newtheorem{corollary}[theorem]{Corollary}
{\theorembodyfont{\rmfamily}
  \newtheorem{example}[theorem]{Example}
   
  }
\newenvironment{proof}{\noindent\textit{Proof.}}{\QED\vskip\theorempostskipamount} 
\def\petitcarre{\vrule height4pt width 4pt depth0pt}
\def\QED{\relax\ifmmode\eqno{\hbox{\petitcarre}}\else{%
  \unskip\nobreak\hfil\penalty50\hskip2em\hbox{}\nobreak\hfil
  \petitcarre
  \parfillskip=0pt \finalhyphendemerits=0\par\smallskip}
  \fi}
\numberwithin{equation}{section}
\numberwithin{figure}{section}
\newcommand{\cA}{\mathcal{A}}
\newcommand{\cB}{\mathcal{B}}
\newcommand{\cC}{\mathcal{C}}
\newcommand{\cD}{\mathcal{D}}
\newcommand{\Z}{\mathbb{Z}}
\newcommand{\N}{\mathbb{N}}
\newcommand{\XS}{\mathsf{X}}
\DeclareMathOperator{\Card}{Card}
\newcommand{\resp}{{resp.}\xspace }
\definecolor{lime}{HTML}{A6CE39}
\DeclareRobustCommand{\orcidicon}{%
	\begin{tikzpicture}
	\draw[lime, fill=lime] (0,0)
	circle [radius=0.16]
	node[white] {{\fontfamily{qag}\selectfont \tiny ID}};
	\draw[white, fill=white] (-0.0625,0.095)
	circle [radius=0.007];
	\end{tikzpicture}
	\hspace{-2mm}
}
\xdef\csname orcid\x\endcsname{\noexpand%
 \href{https://orcid.org/\csname orcidauthor\x\endcsname}{\noexpand\orcidicon}}
\begin{document}
\title{One-sided Hom shifts}

\author[1]{Marie-Pierre B\'eal\orcidA{}}

\author[2]{Alexi Block Gorman}

\affil[1]{Univ. Gustave Eiffel, CNRS, LIGM, Marne-la-Vall\'ee, France}

\affil[2]{Institute for Logic, Language and Computation, Universiteit van Amsterdam}

\maketitle 
\begin{abstract}
We prove that it is decidable whether a one-sided shift of finite type is conjugate to a one-sided Hom-shift,
and whether a tree-shift of finite type is conjugate to a Hom tree-shift. The proof uses Williams's theory for one-sided shifts.\end{abstract}

\section{Introduction} \label{section.introduction}
We consider two types of shift spaces, shifts of sequences and tree-shifts.
A shift of sequences is a set of one- or two-sided infinite sequences of symbols defined 
by a collection of finite words called forbidden blocks. A tree-shift is a set of infinite trees 
defined by a collection of forbidden blocks. 
A shift space of sequences is equipped with a shift transformation, whereas a tree-shift is 
equipped with multiple directional shift transformations.
A conjugacy between shift spaces is a continuous bijection that commutes with the shift transformations.
Shifts spaces defined by a finite set of forbidden blocks are called shifts of finite type.
Note that a one-sided shift of sequences can be seen as a tree-shift with arity $1$.

The problem of deciding whether two-sided finite-type shifts of sequences are conjugate remains open (see, for example, \cite{LindMarcus1995}). By contrast, the situation is significantly simpler for one-sided shifts: Williams \cite{Williams1973} proved that conjugacy is decidable for one-sided shifts of finite type. This decidability also extends to tree-shifts \cite{AubrunBeal2012}, a consequence of the inherently one-sided structure of trees.

Tree-shifts are, however, more complex than one-sided shifts. Petersen and Salama
\cite{PetersenSalama2018}, \cite{PetersenSalama2020} introduced the topological entropy for such spaces as an analog to topological entropy for traditional shift spaces to investigate their complexity. Their results were
generalized to asymptotic pressure for a broad class of systems in \cite{PetersenSalama2023}.
Properties of tree-shifts like entropy, transitivity, Hausdorff dimensions 
have been explored in \cite{BanCHW22}, \cite{BanEtal2025a}, \cite{BanEtal2025b}.

In this paper, we consider a particular class of shifts called Hom shifts.
A Hom shift is defined by an undirected, simple graph $G$ (where self-loops are allowed).
We also consider the class of directed Hom tree-shifts defined by a simple directed graph $G$ (again with self-loops). 
Directed Hom tree-shifts are called hom tree-SFT in \cite{BanEtAl2021}.
A Hom shift (\resp a directed Hom shift)
of sequences is the set of all sequences of symbols where the consecutive pair $ab$ is forbidden whenever 
$(a, b)$ is not an edge in the undirected (\resp directed) graph $G$. 

A Hom shift of trees (\resp a directed Hom tree-shift) is the set of all trees avoiding all blocks of height $2$ with a root labeled $a$ and having at least one child labeled $b$ such that $(a, b)$ is not an edge in the undirected (\resp directed) graph $G$. 

Alternatively, a Hom shift can be described as a nearest neighbour shift of finite type that is symmetric and isotropic. Hom shifts can be defined and studied in the multidimensional context (see 
\cite{CHANDGOTIA_2017},
\cite{ChandgotiaMarcus2018}, \cite{GangloffHellouinOpocha2024}). They form a natural subclass of shifts of finite type that arise in various contexts. For example, the hard square shift and the $n$-coloured chessboard shifts are both two-dimensional Hom shifts. 
Chandgotia and Marcus \cite{ChandgotiaMarcus2018} studied the mixing properties of Hom shifts and
related them to some questions in graph theory therein.

In this paper, we prove that it is decidable whether a one-sided shift of finite type is conjugate to a one-sided Hom shift, and whether a tree-shift of finite type is conjugate to a Hom tree-shift. 

The proof uses Williams's theory for one-sided shifts.
The total amalgamation of a directed multi-graph is the unique 
directed multi-graph obtained after a sequence of out-merging operations. 
In \cite{Williams1973}, Williams proved that if $G$ and $H$ are irreducible directed graphs that define irreducible one-sided edge shifts, then these shifts are conjugate if and only if $G$ and $H$ have the same total amalgamation.
We prove that a one-sided shift of finite type, or a tree-shift of finite type, is conjugate to a one-sided Hom shift
if and only if its total amalgamation satisfies some regularity conditions that are decidable.

For tree-shifts, it was proved in \cite{AubrunBeal2012} that two irreducible edge tree shifts
defined by a bottom-up tree automaton are conjugate if and only if they have the same total amalgamation (a bottom-up tree automaton obtained through merging operations). We give another proof of this result using trim top-down tree automata, and we
define the total amalgamation of a top-down edge tree automaton similarly. A top-down tree automaton is trim if there is at least one (top-down) transition starting at any state. The irreducibility is not required.
We prove that, if $\cA$ and $\cB$ are trim 
top-down edge tree automata defining two tree-shifts of finite type, then these shifts are conjugate if and only if $\cA$ and $\cB$ have the same total amalgamation. 

We prove that a tree-shift of finite type defined by a trim edge tree automaton is conjugate to a Hom tree-shift if the total amalgamation of this tree automaton satisfies some regularity conditions. 
The property is decidable in polynomial space and time. We also show that if two Hom tree shifts defined by undirected graphs $G$ and $H$, respectively, are conjugate, then $G$ is equal to $H$, up to a graph isomorphism.

We prove that a tree-shift of finite type defined by a trim edge tree automaton is conjugate to a directed Hom tree-shift if the total amalgamation of this tree automaton satisfies some symmetry conditions. 
The property is also decidable in polynomial space and time. 

The paper is organized as follows. We recall basic definitions of symbolic dynamics in Section \ref{section.symbolicdynamics}, including Williams's results on one-sided shifts of sequences. Basic definitions on tree shifts and the decidability of conjugacy for this class of shifts are presented in Section \ref{TreeShift}.
One-sided Hom shifts of sequences and Hom tree-shifts are studied in Section \ref{section.homshifts} and
\ref{section.hom.tree.shifts}. 
Directed Hom tree shifts are studied in Section  \ref{section.directed.hom.tree.shifts}. \\

\section{Symbolic dynamics}  \label{section.symbolicdynamics}
We briefly recall some basic definitions of symbolic dynamics. 
For a more complete presentation, see~\cite{LindMarcus1995}, \cite{Kitchens1998}, and \cite{BealBerstelEilersPerrin2010}.

\subsection{Words and sequences}
Let $A$ be a finite set, called the \emph{alphabet}. The elements of $A$ are
called \emph{letters}. A \emph{word} on $A$ is a finite sequence of elements of $A$.
The set of words on $A$ is denoted by $A^*$.

We consider the set $A^\Z$ of \emph{two-sided infinite sequences} 
of elements of $A$, equipped with the product of the discrete topology on $A$.

For $x=(x_n)_{n\in\Z}\in A^\Z$ with $x_n\in A$ and $i\le j$, we write 
$x_{[i,j)}=x_{i}x_{i+1}\cdots x_{j-1}$.
  A word $u$ \emph{occurs in} a sequence $x$  if $u=x_{[i,j)}$
  for some $i,j$. One also says that $u$ is a \emph{block}
  of $x$.

 Let $S$ denote the \emph{shift transformation} defined for $x\in A^\Z$ 
by $S(x)=y$ if $y_n=x_{n+1}$ for $n\in\Z$. It is continuous and one-to-one
from $A^\Z$ to itself.

 We also consider the corresponding sets $A^\N$ of right-infinite
sequences.
The \emph{one-sided shift transformation}
is defined similarly. For $x\in A^\N$, one has $y=S(x)$
if $y_n=x_{n+1}$ for $n\ge 0$. It is continuous, but not one-to-one (except when
$\Card(A)=1$).

\subsection{Shift spaces}
A set $X\subseteq A^\Z$  is a \emph{shift space} if it is topologically closed, and 
shift-invariant, that is, if $S(X)=X$. 

For a language $F \subseteq A^*$, 
the set of sequences $x\in A^\Z$ such that no element
of $F$ occurs in it is denoted by $\XS_F$. It is well known that 
 a set $X\subseteq A^\Z$ is a shift space if and only if
 $X=\XS_F$ for some $F\subseteq A^*$.
 A \emph{block} of $X$ is a word occurring in some sequence of $X$.

  A \emph{shift of finite type} is a shift
  $X=\XS_F$ for some finite set $F$. 

In the sequel, we consider labeled directed graphs (with labeled edges) and unlabeled directed graphs. We say graph for multigraph. 
A labeled or unlabeled directed graph $G$ is denoted by $G=(V, E)$, where $V$ is its set of vertices and $E$ its set of edges. In a labeled directed graph, an edge from a state $s$ to a state $t$ labeled by $a$ is denoted by $(s, a, t)$.

An \emph{edge shift} is the set of two-sided infinite paths in 
a finite directed graph.
An edge shift is a shift of finite type.

A shift space $X$ is \emph{irreducible} if and only if for any two blocks $u$, $v$ of $X$, there is a word $w$ such that $uwv$ is a block of $X$.

If $X$ is a shift space, we let $\cB_n(X)$ denote the set of blocks of $X$ of length $n$.

\subsection{One-sided shift spaces}

A \emph{one-sided shift space} is a closed 
subset $X$ of $A^\N$ such that $S(X) \subseteq X$. 
Note that one-sided shift spaces are usually defined as closed subsets such 
that $S(X) = X$, but we do not require this stronger condition here.
The set $A^\N$ itself is a one-sided shift
space, called the \emph{one-sided full shift}.

For a two-sided sequence $x\in A^\Z$, we define $x^+=x_0x_1\cdots$.
If $X$ is a two-sided shift space, then the set $X^+ = \{x^+ \mid x \in X\}$
is a one-sided shift space.

A one-sided shift space is \emph{of finite type} if it is the set $\XS_F$ of one-sided sequences over $A$ avoiding all words of some finite set $F \subseteq A^*$.
A \emph{one-sided edge shift} is the set $X_G$ of right-infinite paths in 
a finite directed graph $G$. Note that the paths may start at any state.

The \emph{adjacency matrix} of a labeled or unlabeled directed graph $G$ with a set of vertices $E$ is the matrix $M=(M_{st})_{s,t \in E}$ such that $M_{st}$ is the number of edges going from $s$ to $t$ in $G$. Thus, an edge shift is also defined by a square matrix with coefficients in~$\N$.

\begin{example} \label{exampleEdgeShift}
   The one-sided edge shift $X_G$ represented by the directed graph $G$ of Figure~\ref{figureEdgeShift}:

\begin{figure}[h]
\centering
    \begin{tikzpicture}[shorten >=1pt,node distance=2.5cm, every edge/.style={draw,->,>=stealth',auto,semithick},auto]
      \node[state] (1) {$1$};
      \node[state, right of=1] (2) {$2$};
      \draw (1) edge[bend left=30] node {} (2);
      \draw (1) edge[loop above] node {} (1);
      \draw (2) edge[bend left=30] node {} (1);
    \end{tikzpicture}
      \caption{A one-sided edge shift.}\label{figureEdgeShift}
\end{figure}
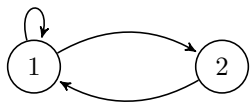
is also defined by the adjacency matrix of $G$, that is, by the matrix 
  \[
  M =
  \begin{bmatrix}
  1 & 1 \\
  1 & 0
  \end{bmatrix}.
  \]
\end{example}

\subsection{One-sided conjugacy}

A \emph{(one-sided) morphism} $\varphi \colon X \subseteq A^\N \to B^\N$, where $X$ is a one-sided shift space, 
is a continuous map commuting with the shift transformation: $S(\varphi(x)) = \varphi(S(x))$, for each $x \in X$.

Let $X$ be a one-sided shift space on the alphabet $A$, and let $B$ be another alphabet.
Given a nonnegative integer $n$,
a \emph{block map}
is a map $f\colon \cB_{n+1}(X)\to B$.
The \emph{(one-sided)-sliding block code}
defined by $f$ is the map $\varphi:X\to B^\N$ defined by
$\varphi(x)=y$ if for every $i\in\N$, $y_i=f(x_{[i,i+n]})$.
\begin{figure}[htbp]
\centering

\tikzset{node/.style={draw,minimum size=0.4cm,inner sep=0pt}}
	\tikzset{title/.style={minimum size=0.5cm,inner sep=0pt}}
        \begin{tikzpicture}
      \node[title]at(-4,1){};
     \node[title]at(-2.8,1){};
  \node[title](xi-m)at(-1.5,1){$\ x_{0} \cdots$};
  \node[title,text width=.6cm]at(-.7,1){$x_{i-1}$};
  \node[node](x0)at(0,1){$x_i$};
  \node[node,text width=1cm]at(.7,1){$\ \ \cdots$};
  \node[node,text width=1cm](xi+n)at(1.7,1){$\ x_{i+n}$};
  \node[title]at(2.8,1){$x_{i+n+1}$};
  \node[title]at(4,1){$\cdots$};
  \node[title]at(-1.5,0){$y_0 \cdots $};
  \node[title,text width=.6cm]at(-.7,0){$y_{i-1}$};
  \node[node]at(0,0){$y_i$};
  \node[title,text width=.6cm]at(.5,0){$\ y_{i+1}$};
   \node[title]at(1.5,0){$\cdots$};

  \draw[->,left](0,.8)--node{$f$}(0,.2);
\end{tikzpicture}
\caption{A one-sided sliding block code.}\label{figureSlidingBlockCode}
\end{figure}

Thus, $y$ is computed from $x$ by sliding a window of length $n+1$
on $x$ as in Figure~\ref{figureSlidingBlockCode}.
The integer $n$ is the \emph{anticipation} of $\varphi$.

The set $\varphi(X)$ is a one-sided shift space.

It is a classical result that $\varphi$ is a morphism if and only if it is a (one-sided) sliding block code. 
A \emph{(one-sided) conjugacy} is an invertible (one-sided) morphism. Its inverse is also a (one-sided) sliding block code, perhaps defined with a block map having a different anticipation. If $\varphi \colon X \subseteq A^N \to B^N$, where $X$ is a one-sided shift space, is a conjugacy,  we say that $X$ and $Y = \varphi(X)$ are \emph{conjugate}. Any one-sided shift of finite type is conjugate to a one-sided edge shift (see for instance \cite{LindMarcus1995}).

\begin{example} \label{example.conjugate}
The one-sided shift of labels of right-infinite paths of the labeled directed graph in Figure \ref{figure.conjugate} is a shift of finite type conjugate to the edge shift of Example \ref{exampleEdgeShift}.
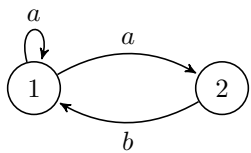
\begin{figure}[h]
\centering
    \begin{tikzpicture}[shorten >=1pt,node distance=2.5cm, every edge/.style={draw,->,>=stealth',auto,semithick},auto]
      \node[state] (1) {$1$};
      \node[state, right of=1] (2) {$2$};
      \draw (1) edge[bend left=30] node {$a$} (2);
      \draw (1) edge[loop above] node {$a$} (1);
      \draw (2) edge[bend left=30] node {$b$} (1);
    \end{tikzpicture}
      \caption{A one-sided shift of finite type.}\label{figure.conjugate}
\end{figure}
\end{example}
\subsection{Out-splitting and out-merging}
In this section, we recall the notion of transformations on directed graphs called out-splittings and out-mergings. The transformation can be defined either for labeled directed graphs or for directed graphs.

Let $X_G$ be a one-sided edge shift defined by a directed graph $G=(V, E)$. We may assume that the graph is \emph{trim}, that is, that
each vertex has at least one outgoing edge.
An \emph{out-splitting} of $G$ is a local transformation of $G$ into a graph $G'=(V', E')$ obtained by selecting a vertex $s$ and partitioning the set of edges going out of $s$ into two non-empty sets $E_1$ and $E_2$.
The graph $G'$ is defined as follows: 

\begin{itemize}
\item $V' = V \setminus \{s\} \cup \{s_1, s_2\}$,
\item $E'$ contains all edges of $E$ neither starting at or ending in $s$,
\item $E'$ contains the edge $(s_1, a, t)$ for each edge $(s, a, t) \in E_1$, and the edge $(s_2, a, t)$ for each edge $(s, a, t) \in E_2$, so long as $t \neq s$,
\item $E'$ contains the edges $(t, a, s_1)$ and $(t, a, s_2)$ if $(t, a, s)$  in $E$,
when $t \neq s$,
\item $E'$ contains the edges $(s_1, a, s_1)$ and $(s_1, a, s_2)$ 
if $(s, a, s)$ in $E_1$, and the edges $(s_2, a, s_1)$ and $(s_2, a, s_2)$
if $(s, a, s) \in E_2$.
\end{itemize}

Note that for each self-loop from $s$ to $s$ in $E$, if this loop is in $E_1$, then $E'$ has a self-loop from $s_1$ to $s_1$ and an edge from $s_1$ to $s_2$. Informally, an out-splitting operation involves splitting a vertex in $G$ into two distinct vertices, partitioning its outgoing edges between them, while preserving the incoming edges by duplicating them for each new vertex (see Figure~\ref{figureSplitting}). 

\begin{example}

The graph $G'$ in the right part of Figure~\ref{figureSplitting} is an out-split of the graph $G$ in the left part of the figure.
Here, $s = 1$, and the partition of the outgoing edges of $1$ is $\{E_1, E_2\}$, where $E_1$ contains the loop around $1$, and $E_2$ contains the two edges going from $1$ to $2$.

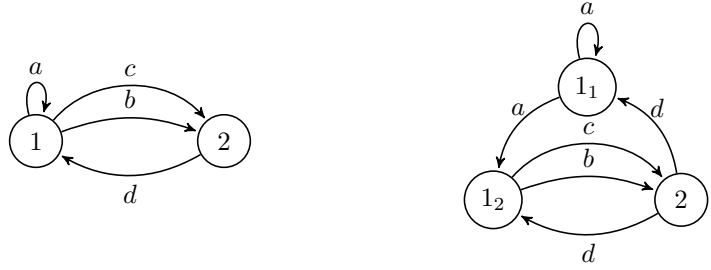
\begin{figure}[htbp]
\begin{minipage}{0.5\textwidth}
\centering
    \begin{tikzpicture}[shorten >=1pt,node distance=2.5cm, every edge/.style={draw,->,>=stealth',auto,semithick},auto]
      \node[state] (1) {$1$};
      \node[state, right of=1] (2) {$2$};
      \draw (1) edge[bend left=20] node {$b$} (2);
      \draw (1) edge[bend left=50] node {$c$} (2);
      \draw (1) edge[loop above] node {$a$} (1);
      \draw (2) edge[bend left] node {$d$} (1);
    \end{tikzpicture}
  \end{minipage}%
  \begin{minipage}{0.5\textwidth}
   \centering
    \begin{tikzpicture}
    [shorten >=1pt,node distance=2.5cm, every edge/.style={draw,->,>=stealth',auto,semithick},auto]
      \node[state] (1) at (1.25,0) {$1_1$};
      \node[state] (2) at (2.5,-1.5) {$2$};
      \node[state] (3) at (0,-1.5) {$1_2$};
     \draw (1) edge[loop above] node {$a$} (1);
      \draw (1) edge[bend right, above]  node {$a$} (3);
      \draw (3) edge[bend left=20] node {$b$} (2);
      \draw (3) edge[bend left=50] node {$c$} (2);
      \draw (2) edge[bend right, above] node {$d$} (1);
      \draw (2) edge[bend left] node {$d$} (3);
    \end{tikzpicture}
      \end{minipage}%
 \caption{An out-splitting.}
 \label{figureSplitting}
\end{figure}
\end{example}

Let $X_G$ be the one-sided edge shift defined by $G$ and $X_{G'}$ be the one-sided edge shift defined by $G'$. Then $X_G$ and $X_{G'}$ are conjugate. Indeed, let $\varphi \colon: E^\N \to E'^\N$
be the sliding block code defined by the $2$-block map $f\colon \cB_2(X_G) \to E'$, where 

\begin{align*}
    f((t, a, u)(u, b, v)) &= (t, a, u)   && \text{if } t, u \neq s, \\
    f((t, a, s)(s, b, v)) &= (t, a, s_1) && \text{if } t \neq s \text{ and } (s, b, v) \in E_1,\\
    f((t, a, s)(s, b, v)) &= (t, a, s_2) && \text{if } t \neq s \text{ and } (s, b, v) \in E_2,\\
    f((s, a, t)(t, b, u)) &= (s_1, a, t) && \text{if } t \neq s \text{ and } (s, a, t) \in E_1,\\
    f((s, a, t)(t, b, u)) &= (s_2, a, t) && \text{if } t \neq s \text{ and } (s, a, t) \in E_2,\\
    f((s, a, s)(s, b, t)) &= (s_1, a, s_1) && \text{if } (s, b, t) \in E_1,\\
    f((s, a, s)(s, b, t)) &= (s_1, a, s_2) && \text{if } (s, b, t) \in E_2.
\end{align*}
defines a conjugacy from $X_G$ onto $X_{G'}$. 
Its inverse is the sliding block code defined by the $1$-block map $g\colon  E' \to E$, where $g(t, \ell , u) = (\pi(t), \pi(\ell),\pi(u))$, with $\pi(t) = t$ if $t \neq s_1, s_2$, $\pi(s_1) = \pi(s_2) = s$, $\pi(a) = a$.

Following \cite{LindMarcus1995}, we define the notion of \emph{general out-splitting}: the outgoing edges from a given state can be partitioned into arbitrary many subsets instead of just two, and the partitioning can occur simultaneously at all of the states instead of just one.  This procedure also accommodates self-loops. 

The inverse operation of an out-splitting is referred to as an \emph{out-merging}.
An out-merging of a directed graph $G'= (V', E')$ can be performed 
if there are two vertices $s_1,s_2$ of $G'$ such that the adjacency matrix $M'$ satisfies:
\begin{itemize}
\item  the column of index $s_1$ is equal to the column of index $s_2$ of $M'$.
\end{itemize}
The adjacency matrix of $G$ is thus the matrix $M$ obtained by adding the rows of index $s_2$ to the row of index $s_1$ of $M'$
and then removing the column of index $s_2$ afterward. 
The graph $G$ is called an \emph{elementary amalgamation} of $G'$. Notice that even if $M'$ has \( 0\text{-}1 \) entries, $M$ may not have \( 0\text{-}1 \) entries.

Let $M'$ be the adjacency matrix of a directed graph $G'$, and $(V_1, V_2, \ldots, V_k)$ be a partition of $V'$ into classes
such that if $s, t$ belong to the same class, then the columns of indices $s$ and $t$ of $M'$ are identical.
When at least one set of the partition has a size greater than $1$, we can perform a \emph{general merging}. 
We define a graph $K$ of adjacency matrix $N$ obtained by merging all states of each $V_i = \{s_{i,1}, \ldots s_{i,k_i}\}$ into a single state
$s_{i,1}$. 

The row in $N$ corresponding to $s_{i,1}$ is obtained by summing the rows of the states of $V_i$ in $M'$ and removing the columns $s_{i,2}, \cdots, s_{i,k_i}$.
The graph $K$ is called a \emph{general amalgamation} of $G'$.

A directed graph $G$ that can be obtained from $G'$ by a sequence of elementary
out-mergings is called an \emph{amalgamation} of $G$.

The following proposition, due to R. F. Williams \cite{Williams1973}, shows that two out-merging transformations commute (see also \cite{BoyleFranksKitchens1990} for a proof).  

\begin{proposition}[Williams \cite{Williams1973}]  \label{propositionWilliams1}
If $G$ and $H$ are amalgamations of a common directed graph $L$, then they have a common amalgamation $K$.
\end{proposition}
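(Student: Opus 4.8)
The plan is to regard elementary out-merging as a rewriting relation on directed multigraphs and to invoke Newman's Lemma (the diamond lemma for terminating relations): a terminating relation that is locally confluent is confluent. Write $G' \Rightarrow G$ when $G$ is an elementary amalgamation of $G'$, and let $\Rightarrow^*$ denote its reflexive-transitive closure, so that ``$G$ is an amalgamation of $L$'' means exactly $L \Rightarrow^* G$. The hypothesis is then $L \Rightarrow^* G$ and $L \Rightarrow^* H$, and the desired conclusion is the existence of a graph $K$ with $G \Rightarrow^* K$ and $H \Rightarrow^* K$; such a $K$ is a common amalgamation of $G$ and $H$. Termination is immediate, since each elementary out-merging deletes exactly one vertex, so the number of vertices strictly decreases and there can be no infinite chain $L \Rightarrow G_1 \Rightarrow G_2 \Rightarrow \cdots$. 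Everything therefore reduces to proving local confluence.

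For local confluence, suppose $L \Rightarrow G_1$ and $L \Rightarrow G_2$ are two elementary out-mergings of $L$ with adjacency matrix $M$, the first collapsing a pair of vertices $\{a,b\}$ whose columns in $M$ coincide, the second collapsing a pair $\{c,d\}$ with equal columns. The whole argument rests on one elementary stability fact: if columns $u$ and $v$ of $M$ are equal and we merge a pair $\{a,b\}$ (replacing row $a$ by the sum of rows $a$ and $b$, and deleting row and column $b$), then columns $u$ and $v$ of the resulting matrix are again equal, because each surviving entry is either unchanged or a sum $M_{a,u}+M_{b,u}$ which equals $M_{a,v}+M_{b,v}$ entrywise.

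I then split into two cases. If $\{a,b\}$ and $\{c,d\}$ are disjoint, the stability fact shows that $\{c,d\}$ still has equal columns in $G_1$ and $\{a,b\}$ still has equal columns in $G_2$; performing the remaining merge on each side yields the same graph $K$, so $G_1 \Rightarrow K$ and $G_2 \Rightarrow K$. If the pairs overlap, say they share the vertex $a$ with $\{a,b\}$ and $\{a,c\}$ both having equal columns, then transitivity of column-equality forces the columns of $a$, $b$, and $c$ to be equal in $M$. A direct entrywise check, using that the merged vertex keeps the column of $a$ with its row-$a$ entry replaced by $M_{a,\cdot}+M_{b,\cdot}$, shows that in $G_1$ the merged vertex and $c$ have equal columns, and symmetrically in $G_2$ the merged vertex and $b$ have equal columns. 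Merging these pairs collapses $\{a,b,c\}$ to a single vertex on both sides, and since the resulting row is the common sum of rows $a$, $b$, $c$ with columns $b$ and $c$ deleted, the two outcomes coincide in a single graph $K$. This establishes local confluence.

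Finally, Newman's Lemma upgrades local confluence to full confluence for the terminating relation $\Rightarrow$, and confluence applied to the hypotheses $L \Rightarrow^* G$ and $L \Rightarrow^* H$ delivers the common amalgamation $K$. I expect the main obstacle to be the overlapping case: one must check carefully that after the first merge the ``equal column'' condition is inherited by the \emph{new} merged vertex and not merely by the old surviving vertices, and that the two collapse orders yield genuinely identical graphs rather than merely isomorphic ones. This is precisely where the explicit bookkeeping of row sums and deleted columns in the adjacency matrix does the work.
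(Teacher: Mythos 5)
Your proof is correct and follows essentially the same route as the paper: the paper itself only cites Williams and Boyle--Franks--Kitchens for this directed-graph statement, but its proof of the tree-automaton analogue (Proposition~\ref{proposition.tree.williams1}) is precisely your local-confluence case analysis --- disjoint merging pairs handled by the stability of column equality under a merge, overlapping pairs handled by transitivity of column equality together with the check that the \emph{new} merged vertex inherits an equal column. What you add is the explicit termination argument and the appeal to Newman's Lemma to pass from single elementary amalgamations to arbitrary sequences, a lifting the paper leaves implicit; the only detail worth inserting is the degenerate subcase $\{a,b\}=\{c,d\}$, where $G_1=G_2$ and there is nothing to prove.
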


As a consequence, given a directed graph $G$, there is a unique graph, up to a renaming of the vertices, 
obtained by performing elementary out-mergings until we cannot perform anymore.
This graph is called the \emph{total amalgamation} of $G$. 

\begin{proposition}[Williams \cite{Williams1973}] \label{propositionWilliams2}
Let $G$ and $H$ be irreducible directed graphs that define one-sided edge shifts $X_G$ and $X_H$. Then $X_G$ and $X_H$ are conjugate if and only if $G$ and $H$ have the same total
amalgamation.
\end{proposition}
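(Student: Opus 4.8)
The plan is to prove the two implications separately, with the forward implication carrying essentially all the weight. Throughout, write $\overline{G}$ for the total amalgamation of $G$, which is well defined up to renaming of vertices by Proposition~\ref{propositionWilliams1} and the remark following it.

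\emph{The easy direction.} Suppose first that $G$ and $H$ have the same total amalgamation $K$. Recall from the discussion of out-splittings that every elementary out-splitting induces a conjugacy between the corresponding one-sided edge shifts, and hence so does its inverse, an elementary out-merging. Since $K$ is obtained from $G$ by a finite sequence of elementary out-mergings, composing the associated conjugacies yields a conjugacy $X_G \to X_K$; likewise one obtains $X_H \to X_K$. Composing the first with the inverse of the second gives a conjugacy $X_G \to X_H$, so $X_G$ and $X_H$ are conjugate.

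\emph{The hard direction.} For the converse I would rely on Williams's decomposition theorem for one-sided edge shifts: any conjugacy $\varphi \colon X_G \to X_H$ can be written as a finite composition of elementary out-splitting codes and elementary out-merging codes. Granting this, let
\[
G = G_0,\ G_1,\ \ldots,\ G_n = H
\]
be the corresponding chain of directed graphs, in which each $G_{i+1}$ is obtained from $G_i$ by a single elementary out-splitting or out-merging. I claim every such elementary move preserves the total amalgamation, so that $\overline{G} = \overline{G_1} = \cdots = \overline{H}$. By symmetry it suffices to treat the case in which $G_i$ is an elementary amalgamation of $G_{i+1}$. Then $\overline{G_i}$ is reached from $G_{i+1}$ by first merging $G_{i+1}$ down to $G_i$ and then merging $G_i$ down to $\overline{G_i}$; this exhibits $\overline{G_i}$ as the endpoint of a sequence of elementary out-mergings starting at $G_{i+1}$ that admits no further merging. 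By Proposition~\ref{propositionWilliams1}, any two maximal merging sequences from $G_{i+1}$ terminate at the same graph up to renaming, so $\overline{G_i} \cong \overline{G_{i+1}}$. Induction along the chain then gives $\overline{G} \cong \overline{H}$, as required. (Irreducibility is used only to guarantee that the decomposition theorem applies in the stated form; the combinatorial invariance argument itself does not need it.)

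\emph{Main obstacle.} The substantial part is the decomposition theorem invoked above, which is the real content of Williams's work. To prove it I would reduce the anticipation of $\varphi$ by recoding: one passes to successive higher edge presentations of $X_G$, each obtained from the previous by a sequence of elementary out-splittings, until $\varphi$ is carried to a $1$-block code. A bijective $1$-block code between edge shifts is then recognized as an elementary (general) amalgamation, since injectivity forces states with identical outgoing edge patterns to be exactly those identified on the target. The delicate points are controlling that this reduction terminates and handling $\varphi$ and $\varphi^{-1}$ simultaneously so that the anticipations of both drop to zero; this bookkeeping, rather than the total-amalgamation argument, is where the difficulty lies, and for it I would follow Williams~\cite{Williams1973} and the exposition in~\cite{LindMarcus1995}.
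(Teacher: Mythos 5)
Your easy direction is exactly the paper's. For the converse, however, you take a genuinely different route from the one the paper actually uses. The paper does not prove Proposition~\ref{propositionWilliams2} in place: it establishes the generalization to tree-shifts (Proposition~\ref{proposition.tree.williams2}, which specializes to arity $d=1$), and there the hard direction proceeds by building, from the given conjugacy $\varphi$ with anticipation $n-1$ and inverse anticipation $n'-1$, an explicit ``pair'' automaton $\cC$ whose states are pairs of blocks of heights $n$ and $n'$; one then checks directly that both original automata are obtained from $\cC$ by explicit sequences of general amalgamations (collapsing the second coordinate height by height, then the first), and concludes by the confluence statement (Proposition~\ref{propositionWilliams1}, resp.\ Proposition~\ref{proposition.tree.williams1}). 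You instead invoke Williams's decomposition theorem --- every one-sided conjugacy factors into elementary out-splitting and out-merging codes --- and combine it with the correct and cleanly argued observation that each elementary move preserves the total amalgamation. The two routes are close in spirit, since the paper's pair-automaton construction is in effect a demonstration that a chain of splittings and mergings connects the two graphs; but your version outsources all of the substance to the decomposition theorem, which you only sketch. Within that sketch, the claim that a bijective $1$-block code between edge shifts is a single general amalgamation is too quick: when the inverse has anticipation $m>1$ one needs several successive rounds of merging (this is precisely the height-by-height collapse the paper performs on $\cC$), so a self-contained argument would have to supply that induction rather than cite \cite{Williams1973} for a result the proposition itself is attributed to. As a reduction of the proposition to a known theorem, though, your argument is sound, and your closing remark that irreducibility is not needed for the invariance step matches the paper's observation that trimness suffices.
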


Proposition \ref{propositionWilliams2} also holds for one-sided edge shifts defined by 
trim directed graphs.
We will demonstrate this fact in section \ref{TreeShift}, as it follows directly from the propositions therein.

\section{Hom shifts}  \label{section.homshifts}
In this section, we consider shift spaces defined by undirected graphs, which we assume to be simple, that is, with at most one (undirected) edge between two vertices, and where loops are allowed. 
We will consider
shift spaces defined by simple directed graphs in Section \ref{section.directed.hom.tree.shifts}.

\subsection{Hom shifts of sequences} \label{d=1}
A \emph{Hom shift} is defined by a simple (unlabeled) undirected graph $G=(V, E)$ as the set $X(G)$ of bi-infinite sequences $x=(x_i)_{i \in \Z}$
of states of $G$ such that $(x_i, x_{i+1})$ is an edge of $G$. In this way, each sequence $x$ can be viewed as a graph homomorphism
$x \colon \Z \to G$. The shift $X(G)$ can be seen as the set of graph homomorphisms, which explains the name `Hom shift', coined by Chandgotia in \cite{CHANDGOTIA_2017}.
A \emph{one-sided Hom shift} is a sequence $x=(x_i)_{i \in \N}$
of states in $G$ such that $(x_i, x_{i+1})$ is an edge in $G$. It is also denoted $X(G)$.

To a simple undirected graph $G=(V,E)$ corresponds its \emph{underlying directed graph} $G'=(V',E')$ where $V'=V$ and $(s,t) \in E'$, $(t,s) \in E'$ whenever $(s,t) \in E$.
A one-sided Hom shift is a one-sided shift of finite type. Indeed, it is equal to $X_F$, where $F$ is the finite set $\{st \mid (s,t) \notin E\}$.

\begin{lemma} \label{lemma.homshifts}
A one-sided Hom shift defined by an undirected graph is conjugate to the one-sided edge shift
defined by its underlying directed graph.
\end{lemma}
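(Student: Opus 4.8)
The plan is to exhibit an explicit conjugacy $\varphi \colon X(G) \to X_{G'}$ realized as a sliding block code, together with its inverse. The underlying idea is that a point of the Hom shift $X(G)$ is a sequence of vertices of $G$ with consecutive pairs adjacent, while a point of the edge shift $X_{G'}$ is a sequence of directed edges of $G'$; the natural candidate simply reads off, from a sequence of adjacent vertices, the sequence of directed edges joining consecutive vertices. Concretely, I would define $\varphi$ by the $2$-block map $f \colon \cB_2(X(G)) \to E'$ with $f(st) = (s,t)$, so that $\varphi(x)_i = (x_i, x_{i+1})$ for every $x \in X(G)$ and $i \in \N$.

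First I would check that $\varphi$ takes values in $X_{G'}$. For $x \in X(G)$, each pair $(x_i, x_{i+1})$ is an edge of $G$, hence a directed edge of $G'$; moreover the terminal vertex $x_{i+1}$ of $\varphi(x)_i$ is the initial vertex of $\varphi(x)_{i+1}$, so $\varphi(x)$ is a genuine right-infinite path in $G'$. Since $\varphi$ is a sliding block code, it is automatically continuous and commutes with the shift, so it is a one-sided morphism by the classical result recalled above. Next I would produce the inverse as the $1$-block map $g \colon E' \to V$ sending a directed edge $(s,t)$ to its initial vertex $s$, yielding a sliding block code $\psi \colon X_{G'} \to X(G)$ with $\psi(e)_i$ equal to the initial vertex of $e_i$. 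I would then verify that $\psi$ lands in $X(G)$ — each $e_i \in E'$ forces $\{s_i, s_{i+1}\}$ to be an undirected edge of $G$ — and that $\psi$ and $\varphi$ are mutually inverse: the identity $\psi \circ \varphi = \mathrm{id}$ is immediate since the initial vertex of $(x_i, x_{i+1})$ is $x_i$, while for $\varphi \circ \psi = \mathrm{id}$ one uses the path condition, writing $e_i = (s_i, t_i)$ with $t_i = s_{i+1}$, to get $\varphi(\psi(e))_i = (s_i, s_{i+1}) = (s_i, t_i) = e_i$. Hence $\varphi$ is an invertible morphism, that is, a conjugacy.

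The only point that genuinely requires care — and the crux of the argument — is that $G$ is a \emph{simple} graph, so that the correspondence between ordered pairs of adjacent vertices and directed edges of $G'$ is a bijection. This is exactly what makes $g$ (and therefore $\psi$) well defined and $\varphi$ injective; for a multigraph the vertex sequence would not determine the edge sequence, and the vertex shift and edge shift would in general fail to be conjugate. Everything else is the routine bookkeeping confirming that these two sliding block codes compose to the identity on each side.
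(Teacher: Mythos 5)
Your proof is correct and is essentially the paper's proof in the opposite direction: the paper exhibits the same pair of sliding block codes, taking the $1$-block map $(s,t)\mapsto s$ as the conjugacy $X_{G'}\to X(G)$ and the $2$-block map $st\mapsto(s,t)$ as its inverse, while you present them the other way around. The additional verifications you spell out (well-definedness, mutual inverseness, the role of simplicity of $G$) are the routine checks the paper leaves implicit.
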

\begin{proof}
    Let $G = (V, E)$ be an undirected graph and $G'= (V, E')$ its underlying directed graph.
    Let $\varphi$ be the
sliding block code defined by the $1$-block map $f \colon E' \to V$ with $f((s,t)) = s$.
The map $\varphi$
 is a conjugacy from $X_{G'}$ to $X(G)$. Its inverse is the sliding block code defined by the $2$-block map $g \colon V^2 \to E'$ with 
 $g(st) = (s, t) $.
\end{proof}

\begin{example} \label{exampleHomShift}
A one-sided Hom shift $X$ on the alphabet $A=\{a, b\}$ is defined by the undirected graph in the left part of Figure~\ref{figureHomShift}.
It is the set of right-infinite sequences avoiding the block $bb$.
The underlying directed graph $G'=(V, E')$ is represented in the right part of Figure~\ref{figureHomShift}. 
It defines a one-sided edge shift $X_{G'}$ conjugate to $X(G)$.

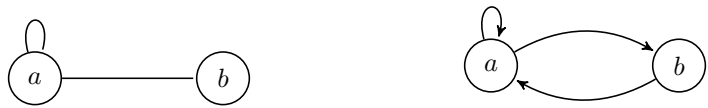
\begin{figure}[h]
\begin{minipage}{0.5\textwidth}
\centering
    \begin{tikzpicture}  
    [shorten >=1pt,node distance=2.5cm, every edge/.style={draw, semithick},auto]
      \node[state] (1) {$a$};
      \node[state, right of=1] (2) {$b$};
      \draw (1) edge node {} (2);
      \draw  (1)  edge[loop above, every loop/.style={-, semithick, draw}]  node  {} (1);
    \end{tikzpicture}\\
  \end{minipage}%
  \begin{minipage}{0.5\textwidth}
   \centering
    \begin{tikzpicture}[shorten >=1pt,node distance=2.5cm, every edge/.style={draw,->,>=stealth',auto,semithick},auto]
      \node[state] (1) {$a$};
      \node[state, right of=1] (2) {$b$};
      \draw (1) edge[bend left]  node {} (2);
      \draw (2) edge [bend left] node {} (1);
      \draw (1)  edge[loop above] node  {} (1);
    \end{tikzpicture}\\
      \end{minipage}%
 \caption{The Hom shift $X(G)$ on the left, and the edge shift $X_{G'}$ on the right.}\label{figureHomShift}
\end{figure}
\end{example}
Notice that $X_{G'}$ itself is not a Hom shift. Indeed, $(a, a)(a, b)$ is a block of $X_{G'}$ while $(a, b)(a, a)$ is not.
Thus, the Hom shift property is not invariant under conjugacy.

We say that a directed graph $G$ is \emph{regular} if 
\begin{itemize}
\item for each vertex $s$ of $G$, there is a positive integer $n_s$ such that for each vertex $t$ of $G$ the number of edges going from $s$ to $t$ is either $0$ or $n_s$. Equivalently, for each state $s$, the nonzero values of the row of index $s$ in the adjacency matrix of $G$ are identical.
\item whenever there is at least one edge from $s$ to $t$ in $G$, then there is at least one edge from $t$ to $s$ in $G$.
\end{itemize}

Note that the underlying directed graph of a Hom shift is regular. Indeed, the two conditions follow directly from the definitions since the adjacency matrix is a symmetrical $0$-$1$-matrix.

\begin{proposition} \label{propositionRegular1}
 Any amalgamation of a regular directed graph is a regular directed graph.
\end{proposition}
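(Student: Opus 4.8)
The plan is to show that a single elementary out-merging preserves regularity, and then conclude by induction on the number of mergings (invoking Proposition~\ref{propositionWilliams1} is not needed, since any amalgamation is obtained by a finite sequence of elementary out-mergings by definition). So let $G'$ be a regular directed graph with adjacency matrix $M'$, and suppose an elementary out-merging is performed on two vertices $s_1, s_2$ whose columns in $M'$ are equal. Let $M$ be the resulting matrix: the row of index $s_1$ becomes the sum of rows $s_1$ and $s_2$, and then column $s_2$ is deleted (and row $s_2$ is deleted). I must verify that the graph $G$ with adjacency matrix $M$ satisfies both bullet points of regularity.

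First I would check the second condition, symmetry of the support, since it is the easier of the two. The support of $M'$ is symmetric: there is an edge $s\to t$ iff there is an edge $t\to s$. Because columns $s_1$ and $s_2$ of $M'$ agree, they have the same support, so the rows $s_1$ and $s_2$ also have the same support (by symmetry of the support of $M'$); hence merging rows $s_1,s_2$ and columns $s_1,s_2$ keeps the support symmetric. Concretely, an edge from the merged vertex to $t$ exists in $G$ iff $s_1\to t$ or $s_2\to t$ in $G'$, which by symmetry of $G'$ happens iff $t\to s_1$ or $t\to s_2$, i.e.\ iff there is an edge from $t$ to the merged vertex in $G$. For vertices other than the merged one, the argument is immediate since their rows and columns are only affected by the column deletion, which preserves symmetry because columns $s_1$ and $s_2$ coincided.

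Next I would check the first (row-regularity) condition, which is the main obstacle. For a vertex $t\neq s_1$ of $G$, its outgoing multiplicities in $M$ are the same as in $M'$ except that the two entries toward $s_1$ and $s_2$ are combined into a single entry toward the merged vertex. Here I use regularity of $G'$: the nonzero entries in row $t$ of $M'$ all equal $n_t$. Since columns $s_1,s_2$ have equal support, the entries $M'_{t s_1}$ and $M'_{t s_2}$ are simultaneously zero or simultaneously nonzero; when nonzero, each equals $n_t$, so their sum is $2n_t$. This is the delicate point: the merged entry $M_{t,\text{merged}} = M'_{t s_1} + M'_{t s_2}$ could be $2n_t$ rather than $n_t$, which would violate regularity. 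I would resolve it by observing that equality of the columns $s_1$ and $s_2$ forces these two entries to be equal, and regularity forces each nonzero entry to be exactly $n_t$; but the point is subtler, so I would instead argue that since $G'$ is regular and columns $s_1,s_2$ coincide, whenever $M'_{t s_1}=M'_{t s_2}=n_t$ the merged row $t$ has its nonzero entries equal to $n_t$ away from the merged column and $2n_t$ at the merged column, and one must rescale the regularity constant. The clean way out is to note that regularity only requires the nonzero off-diagonal pattern to be row-constant, so I would carefully define $n_t$ for $G$ as the common nonzero value and verify it remains well defined by checking that the merged column entry matches the other nonzero entries.

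For the merged row $s_1$ of $M$, its entries are $M_{s_1 t} = M'_{s_1 t} + M'_{s_2 t}$ for $t\neq s_2$. Since $G'$ is regular, the nonzero entries of row $s_1$ all equal $n_{s_1}$ and those of row $s_2$ all equal $n_{s_2}$; moreover rows $s_1$ and $s_2$ have the same support (both equal to the common support of columns $s_1,s_2$, by the symmetry established above), so $M'_{s_1 t}$ and $M'_{s_2 t}$ are simultaneously zero or nonzero, giving $M_{s_1 t}\in\{0,\,n_{s_1}+n_{s_2}\}$, a constant nonzero value. Thus row $s_1$ of $M$ is regular with constant $n_{s_1}+n_{s_2}$. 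I expect the genuine content of the lemma to lie in confirming that the column-equality hypothesis of an out-merging is exactly what synchronizes the supports of the two merged rows, so that summing them never creates a row with two distinct nonzero values; this is where the definition of out-merging and the symmetry of the support must be used together, and I would present that synchronization carefully as the crux of the argument before concluding by induction.
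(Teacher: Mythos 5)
Your overall plan (reduce to a single elementary out-merging and check the two regularity conditions separately) is the same as the paper's, and your treatment of the symmetry condition and of the merged row is essentially correct: equality of the columns of $s_1$ and $s_2$, combined with the symmetry of the support, forces the rows of $s_1$ and $s_2$ to have the same support, so the summed row has all of its nonzero entries equal to $n_{s_1}+n_{s_2}$. But there is a genuine gap exactly at the point you yourself flag as ``the delicate point.'' You assert that for a vertex $t$ other than the merged one, the entry of the new matrix toward the merged vertex is $M'_{t s_1}+M'_{t s_2}$, hence $2n_t$ when nonzero. That is a misreading of the definition of an elementary amalgamation: the rows of $s_1$ and $s_2$ are summed, but the column of index $s_2$ is simply \emph{deleted}, not added to the column of index $s_1$ (the two columns are identical by hypothesis, which is what makes the deletion well defined; dually, an out-splitting \emph{duplicates} incoming edges rather than partitioning them). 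Hence the entry from $t$ toward the merged vertex is $M'_{t s_1}=n_t$, row $t$ keeps the same regularity constant $n_t$, and there is nothing to rescale.

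Because you retain the incorrect value $2n_t$, your proposed resolution cannot succeed: a row whose nonzero entries are $n_t$ away from the merged column and $2n_t$ at the merged column has two distinct nonzero values, and no choice of ``regularity constant'' repairs that; the first condition of regularity would simply fail. The step where you promise to ``verify that the merged column entry matches the other nonzero entries'' is precisely the verification that breaks under your reading of the operation. Once the column-deletion convention is used, that case becomes immediate and the rest of your argument closes essentially as in the paper.
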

\begin{proof}
Without loss of generality, we may consider only an elementary amalgamation. 
We show that the definition of an elementary amalgamation implies that the two conditions of regularity are maintained. 

Let $s_1$ and $s_2$ be two vertices of a regular directed graph $G = (V, E)$ with adjacency matrix $M$. Let $N$ be the adjacency matrix of the amalgamation $G'$ after merging $s_1$ and $s_2$ into a state $s_{12}$. 
Let $n_1$ (\resp $n_2$) be the value of the nonzero entries of the row of $M$ of index $s_1$ (\resp $s_2$). 

Suppose that $s_1$ and $s_2$ satisfy the prerequisites for an out-merging,
i.e.\ the columns corresponding to $s_1$ and $s_2$ in $M$ are identical.
For $s \in V$, we define $\pi(s)$ as $s$ if $s \neq s_1, s_2$, and 
$\pi(s_1) = \pi(s_2) = s_{12}$.

Let us now show that the second condition of regularity is verified. 
If $N_{\pi(s), \pi(t)} > 0$, then there exist $s'$ with $\pi(s') = \pi(s)$ such that for each $t'$ with $\pi(t') = \pi(t)$, one has $M_{s', t'} > 0$.
This implies that $M_{t', s'} > 0$ by regularity of $G$, and thus 
$N_{\pi(t), \pi(s)} > 0$.

We now show that the first condition of regularity is verified. 
If $N_{\pi(s), \pi(t)} > 0$, then there exist $s'$ with $\pi(s') = \pi(s)$ such that for each $t'$ with $\pi(t') = \pi(t)$, one has $M_{s', t'} > 0$. Hence, $M_{s', t'}  = n_{s'}$.
By construction, if $s \neq s_1, s_2$, we have $s' = s$ and we get $N_{\pi(s'), \pi(t')} = N_{\pi(s), \pi(t)} = n_{s'} = n_s$. 
If $s = s_1$ or $s = s_2$, then $s' = s_1$ or $s'= s_2$.
Since $M_{s', t'} > 0$,  $M_{t', s'} > 0$, implying $M_{t', s_1} > 0$ and $M_{t', s_2} > 0$,
and $M_{s_1, t'} > 0$ and $M_{s_2, t'} > 0$.
Thus, $M_{s_1, t'} = n_1$ and $M_{s_2, t'} = n_2$, implying $N_{s_{12}, \pi(t)} = n_1 + n_2$. 
Thus, $G'$ is regular.

\end{proof}

Note that the graph obtained after an out-splitting of a regular graph is not necessarily regular, as one can see in Figure \ref{figureSplitting}.
Indeed, there is an edge from $1_1$ to $1_2$ and there is no edge from $1_2$ to $1_1$
in the graph in the right part of the figure.

\begin{proposition}  \label{propositionRegular2}
Any one-sided edge shift defined by a regular directed graph is conjugate to a one-sided Hom shift.
\end{proposition}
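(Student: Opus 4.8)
The plan is to combine Williams's conjugacy criterion with Lemma~\ref{lemma.homshifts}, reducing everything to an equality of total amalgamations. Let $G$ be regular and let $\hat G$ be its total amalgamation, with adjacency matrix $N$. By Proposition~\ref{propositionRegular1} the graph $\hat G$ is again regular, so for each vertex $s$ the nonzero entries of the row of $N$ indexed by $s$ share a common value $n_s$; assuming, as we may, that $G$ (hence $\hat G$) is trim, each $n_s$ is a well-defined positive integer. The feature I will exploit is that, being a total amalgamation, $\hat G$ has pairwise distinct columns, and that its support is symmetric by the second regularity clause.

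Next I would construct an explicit simple undirected graph $H$ by a \emph{twin blow-up} of $\hat G$: replace each vertex $s$ by a set $B_s$ of exactly $n_s$ vertices, and put edges as follows. For $s \neq t$ with $N_{st} > 0$ (equivalently $N_{ts} > 0$, by symmetry of the support), join every vertex of $B_s$ to every vertex of $B_t$. Inside $B_s$, which must satisfy $N_{ss} \in \{0, n_s\}$ by regularity, put a complete graph with a loop at each vertex when $N_{ss} = n_s$, and no edge at all when $N_{ss} = 0$. The symmetry of the support is exactly what makes this a well-defined simple undirected graph (loops allowed). Let $H'$ be its underlying directed graph, a symmetric $0$-$1$ graph.

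Then I would verify that merging each class $B_s$ to a single vertex is a legitimate out-merging whose result is precisely $\hat G$. Within a class the vertices have identical columns in $H'$ — false twins when $N_{ss}=0$, true twins when $N_{ss}=n_s$ — so they can be amalgamated two at a time, and a short check shows this twin property persists after each elementary merge, making the class-collapse a genuine sequence of elementary out-mergings. Computing the merged entries gives $\sum_{u \in B_s}(H')_{u,v} = n_s$ for a representative $v \in B_t$ when $B_s,B_t$ are completely joined, and either $n_s$ or $0$ on the diagonal according to the within-class choice; this reproduces $N$ exactly. Since $N$ has distinct columns, $\hat G$ admits no further out-merging, so by confluence (Proposition~\ref{propositionWilliams1}) it is the total amalgamation of $H'$. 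Thus $G$ and $H'$ share the total amalgamation $\hat G$, whence the trim version of Proposition~\ref{propositionWilliams2} gives $X_G \cong X_{H'}$, and Lemma~\ref{lemma.homshifts} gives $X_{H'} \cong X(H)$; composing these yields the desired conjugacy.

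I expect the main obstacle to be the bookkeeping of multiplicities and loops in the blow-up: choosing $|B_s| = n_s$ and the within-class structure so that the amalgamated off-diagonal and diagonal entries match $N$, including the deliberately asymmetric outcome $N_{\bar s \bar t} = n_s$ versus $N_{\bar t \bar s} = n_t$ that regularity encodes, and confirming that the identical-column property survives each elementary merge so that collapsing a class is indeed a sequence of elementary out-mergings rather than merely a general amalgamation.
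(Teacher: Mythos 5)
Your proposal is correct and rests on essentially the same construction as the paper's proof: blowing each vertex $s$ up into $n_s$ twin copies to obtain a symmetric $0$-$1$ graph whose amalgamation recovers the original regular graph, then invoking Lemma \ref{lemma.homshifts}. The only difference is in the certification of the conjugacy --- the paper applies the blow-up to $G$ itself and observes directly that $G$ is a general amalgamation of the blown-up graph (hence the two edge shifts are conjugate), whereas you first pass to the total amalgamation $\hat G$ and route the conclusion through Williams's criterion (Proposition \ref{propositionWilliams2}); this is valid, and your checks of the twin property and of the merged entries are right, but it is slightly more machinery than needed.
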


\begin{proof} 
Let $G$ be a regular directed graph with adjacency matrix $M$, and $X_G$ be the one-sided edge shift defined by $G$. We construct a new directed graph $G'= (V', E')$ with adjacency matrix $M'$ as follows.
For each vertex $s$ of $G$ whose corresponding row in $M$ has all its nonzero entries equal to $n_s$, we define $n_s$ vertices $s_1, \cdots s_{n_s}$ in the graph $G'$,
and add exactly one edge from each $s_i$ to each $t_j$ when $M_{s, t} = n_s$.

By construction, the adjacency matrix $M'$ of $G'$ is a $0\text{-}1$-matrix. Since $G$ is regular,  $M_{s, t} > 0$ if and only if $M_{t, s} > 0$. By construction, $M'_{s_i,t_j} = 1$
if and only if $M_{s, t} > 0$. Thus, $M'_{s_i,t_j} = 1$ if and only if $M'_{t_j, s_i} = 1$,
The graph $G$ is a general amalgamation of $G'$. Hence, the one-sided edge shift $X_G$ is conjugate to the one-sided edge shift $X_{G'}$. 

Let $H$ be the undirected graph whose set of vertices is $V'$, and where $(s_i, t_j)$ is an (undirected) edge of $H$ if and only if $M'_{s_i, t_j} = 1$. 
The one-sided edge shift $X_{G'}$ is the one-sided edge shift defined by the underlying directed graph of $H$.
By Lemma \ref{lemma.homshifts}, the edge shift $X_{G'}$ is conjugate to the Hom shift $X(H)$. 
Thus, $X_{G}$ is conjugate to a Hom shift as a one-sided shift.
\end{proof}
We now give an effective characterization of one-sided edge shifts conjugate to a one-sided Hom shift.
In the sequel, we will assume that the directed graphs are trim.

\begin{proposition} \label{proposition.amalgamation1}
A one-sided edge shift defined by a trim directed graph $G$ is conjugate to a Hom shift if and only if the total amalgamation of $G$ is regular.
\end{proposition}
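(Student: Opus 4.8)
The plan is to prove both directions by combining the structural results already established about regularity and total amalgamation. The key conceptual ingredients are: Proposition~\ref{propositionWilliams2} (conjugacy of irreducible one-sided edge shifts is equivalent to having the same total amalgamation), Proposition~\ref{propositionRegular1} (amalgamation preserves regularity), Proposition~\ref{propositionRegular2} (a regular graph defines an edge shift conjugate to a Hom shift), and Lemma~\ref{lemma.homshifts} (a Hom shift is conjugate to the edge shift of its underlying directed graph, which is itself regular). Since any one-sided shift of finite type is conjugate to a one-sided edge shift, and conjugacy is an equivalence relation, it suffices to work entirely at the level of edge shifts defined by trim directed graphs and their total amalgamations.

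\medskip
\noindent\emph{Sufficiency.} First I would assume that the total amalgamation $\widetilde{G}$ of $G$ is regular. Since $X_G$ and $X_{\widetilde{G}}$ are conjugate (the total amalgamation is reached by a sequence of out-mergings, each giving a conjugacy), and $\widetilde{G}$ is regular, Proposition~\ref{propositionRegular2} immediately gives that $X_{\widetilde{G}}$, hence $X_G$, is conjugate to a one-sided Hom shift. This direction is essentially a direct invocation of the preceding propositions.

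\medskip
\noindent\emph{Necessity.} Conversely, suppose $X_G$ is conjugate to a Hom shift $X(H)$ for some undirected graph $H$. By Lemma~\ref{lemma.homshifts}, $X(H)$ is conjugate to the one-sided edge shift $X_{H'}$, where $H'$ is the underlying directed graph of $H$; and as noted after the definition of regularity, $H'$ is regular (its adjacency matrix is a symmetric $0$-$1$ matrix). So $X_G$ and $X_{H'}$ are conjugate one-sided edge shifts with $H'$ regular. The goal is then to show the total amalgamation $\widetilde{G}$ of $G$ is regular. By Proposition~\ref{propositionRegular1}, the total amalgamation $\widetilde{H'}$ of $H'$ is regular, being an amalgamation of the regular graph $H'$. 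Since conjugate edge shifts have the same total amalgamation, $\widetilde{G} = \widetilde{H'}$ (up to renaming of vertices), and therefore $\widetilde{G}$ is regular.

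\medskip
\noindent\textbf{The main obstacle.} The delicate point is the appeal to ``conjugate edge shifts have the same total amalgamation'' in the necessity direction, since Proposition~\ref{propositionWilliams2} is stated only for \emph{irreducible} directed graphs, whereas the graphs here (arising from a general shift of finite type and from an arbitrary undirected graph) need not be irreducible. The excerpt explicitly flags that Proposition~\ref{propositionWilliams2} also holds for trim directed graphs, with a promise to prove this in Section~\ref{TreeShift}. I would therefore invoke that trim version of Williams's theorem rather than the irreducible one, and I must be careful that $H'$ and $G$ are both trim: $G$ is trim by the standing assumption, and $H'$ is trim because every vertex of an undirected graph with at least one incident edge yields an outgoing directed edge (and vertices of $H$ must carry some edge to appear as a block of $X(H)$, so we may assume $H$ has no isolated vertices). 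Assembling these observations—reducing to edge shifts, transporting regularity through Lemma~\ref{lemma.homshifts} and Proposition~\ref{propositionRegular1}, and equating total amalgamations via the trim form of Williams's theorem—completes the equivalence.
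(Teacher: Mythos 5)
Your proposal is correct and follows essentially the same route as the paper's proof: sufficiency via Proposition~\ref{propositionRegular2} applied to the regular total amalgamation, and necessity by passing to the regular underlying directed graph $H'$ via Lemma~\ref{lemma.homshifts}, equating total amalgamations using the trim form of Proposition~\ref{propositionWilliams2}, and invoking Proposition~\ref{propositionRegular1}. Your explicit attention to the trimness of $H'$ is a point the paper glosses over but handles identically in spirit.
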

\begin{proof}
Let $X_G$ be the one-sided edge shift defined by a trim directed graph $G$. Assume that $X_G$ is conjugate to a Hom shift $X(H)$, where $H$ is an undirected graph.

The one-sided shift $X_G$ is thus conjugate to the edge shift $X_{H'}$ defined by the underlying directed graph $H'$ of $H$, which is regular.
As a consequence of Proposition \ref{propositionWilliams2} and since $G$ and $H'$ are trim, the total amalgamations of $G$ and $H'$ are the same, up to a renaming of the vertices.
By Proposition \ref{propositionRegular1}, this total amalgamation is regular.

Conversely, if $G$ has a total amalgamation $K$ that is regular, then the one-sided shift $X_G$ is conjugate 
to the one-sided edge shift $X_K$, which is conjugate to a Hom shift by Proposition \ref{propositionRegular2}. 
\end{proof}

\begin{corollary}  It is decidable whether a one-sided shift of sequences of finite type is conjugate to a one-sided Hom shift.
\end{corollary}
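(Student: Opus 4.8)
The plan is to reduce the claim to the decidability of the regularity condition appearing in Proposition~\ref{proposition.amalgamation1}. Recall that any one-sided shift of finite type is conjugate to a one-sided edge shift, and this conjugacy is effective: given a finite forbidden set $F$, one can compute a trim directed graph $G$ (for instance by passing to a higher block presentation and then trimming vertices with no outgoing edges) such that $X_F$ is conjugate to $X_G$. Since conjugacy is transitive, $X_F$ is conjugate to a one-sided Hom shift if and only if $X_G$ is. By Proposition~\ref{proposition.amalgamation1}, the latter holds if and only if the total amalgamation of $G$ is regular, so it suffices to show that this last condition can be checked algorithmically.

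First I would verify that the total amalgamation is computable. By Proposition~\ref{propositionWilliams1} the total amalgamation is well defined up to renaming of vertices, and it is obtained by repeatedly performing elementary out-mergings until none is possible. Each elementary out-merging requires finding two vertices $s_1,s_2$ whose columns in the adjacency matrix coincide; this is a finite search over pairs of vertices, and each merge strictly decreases the number of vertices. Hence the procedure terminates after at most $\Card(V)$ merges, and at every stage detecting a mergeable pair is a straightforward comparison of columns of an integer matrix. This yields an effective algorithm producing the adjacency matrix of the total amalgamation $K$.

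Next I would check regularity of $K$ directly from its adjacency matrix $N$, using the two defining conditions. For the first condition, for each row index $s$ I inspect the nonzero entries of that row and confirm they are all equal; equivalently I test whether the set of distinct nonzero values in row $s$ has cardinality at most one. For the second condition, I test for each pair $(s,t)$ that $N_{s,t}>0$ implies $N_{t,s}>0$, i.e.\ that the support of $N$ is a symmetric relation. Both tests are elementary finite computations on the entries of $N$, so regularity of $K$ is decidable.

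Chaining these steps gives the result: from $F$ we effectively compute a trim graph $G$, then its total amalgamation $K$, then test $K$ for regularity, and by Proposition~\ref{proposition.amalgamation1} this test decides whether $X_F$ is conjugate to a one-sided Hom shift. The only point that requires care, and which I expect to be the main obstacle, is ensuring that the passage from an arbitrary finite-type presentation $X_F$ to a \emph{trim} edge shift $X_G$ is genuinely algorithmic and preserves conjugacy, so that Proposition~\ref{proposition.amalgamation1} applies verbatim; the remaining matrix computations are routine.
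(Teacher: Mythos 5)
Your proof is correct and follows essentially the same route as the paper: pass from $\XS_F$ to a trim edge shift via a higher block presentation, compute the total amalgamation by iterated out-mergings, and test regularity using Proposition~\ref{proposition.amalgamation1}. The step you flag as needing care (the effective passage to a trim edge shift) is exactly the de Bruijn-style construction the paper uses, so there is no gap.
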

\begin{proof}
Let $X = \XS_F$ be a one-sided shift of finite type over $A$. We may assume that all words of $F$ have the same length $n$. Then $\XS_F$ is conjugate to $X_G$, where $G$ is the trim part of the directed graph whose set of states is $A^{n-1}$, and the edges $(au, ub)$ with that $a, b \in A$ and $aub \notin F$.
Thus, we may assume that $X$ is the one-sided edge shift defined by a trim directed graph $G$. 
We apply the construction of Proposition \ref{propositionWilliams2} to build the total amalgamation of 
$G$.  We conclude with Proposition \ref{propositionRegular1} by checking whether this total amalgamation is regular.

If the input is a trim directed graph $G$ that defines the one-sided edge shift,
the time and space complexity are polynomial. Indeed, if $G$ has $n$ vertices, the total amalgamation can be performed in time $O(n^2)$ with a linear lexicographical sort of the columns of the adjacency matrix of $G$ for each general amalgamation.

\end{proof}

\begin{example}
Let $X_G$ be the one-sided edge shift defined by the graph $G$ with the following adjacency matrix $M$:
\[
M = \begin{bmatrix}
2 & 2 & 1\\
1 & 1 & 2 \\
1 & 1 & 0
\end{bmatrix}.
\]
The first and the second columns corresponding to states $1$ and $2$ are identical. Thus, states $1$ and $2$ can be merged
and the amalgamation is the graph $K$ with the following adjacency matrix $N$:
\[
N = \begin{bmatrix}
3 & 3 \\
1 & 0  
\end{bmatrix}.
\]
After this out-merging, no other out-merging is possible. Hence,
the total amalgamation of $G$ is the graph $K$.
Since $K$ is regular, $X_G$ is conjugate to a one-sided Hom shift, the shift $X(H)$ defined by the undirected graph of Figure~\ref{figure.amalgamation}.
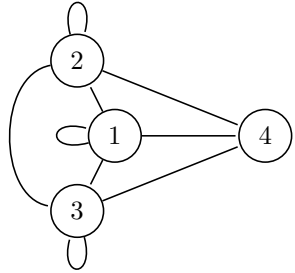
\begin{figure}[h]
\centering
    \begin{tikzpicture}  
    [shorten >=1pt,node distance=2.5cm, 
    every edge/.style={draw, semithick},auto]
      \node[state] (1) at (0.5,0) {$1$};
      \node[state] (2) at (0,1) {$2$};
      \node[state] (3) at (0,-1) {$3$};
      \node[state] (4) at (2.5,0) {$4$};
      \draw (1) edge node {} (2);
      \draw (1) edge node {} (3);
      \draw (2) edge[bend right=80] node {} (3);
       \draw (1) edge node {} (4);
       \draw (2) edge node {} (4);
       \draw (3) edge node {} (4);
      \draw  (1)  edge[loop left, every loop/.style={-, semithick, draw}]  node  {} (1);
      \draw  (2)  edge[loop above, every loop/.style={-, semithick, draw}]  node  {} (2);
      \draw  (3)  edge[loop below, every loop/.style={-, semithick, draw}]  node  {} (3);
    \end{tikzpicture}\\
  \caption{The one-sided Hom shift $X(H)$ conjugate to $X_G$.} \label{figure.amalgamation}
\end{figure}
\end{example}

The following proposition shows that two one-sided Hom shifts defined by distinct undirected graphs are never conjugate.

\begin{proposition}  \label{proposition.amalgamation2}
Let $X(G)$ and $X(H)$ be two one-sided Hom shifts defined by the undirected graphs $G$ and $H$, respectively.
If $X(G)$ and $X(H)$ are conjugate, then the undirected graphs $G$ and $H$ are isomorphic as undirected graphs.
\end{proposition}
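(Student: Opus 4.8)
The plan is to transport the problem from Hom shifts to edge shifts of underlying directed graphs, use Williams's theorem to produce a common total amalgamation, and then show that the undirected graph can be reconstructed up to isomorphism from that total amalgamation. First I would discard any vertex of $G$ or $H$ carrying no incident edge: such a vertex occurs in no one-sided sequence, so deleting it leaves the Hom shift unchanged (this mild normalization is needed, since an isolated vertex could otherwise break the statement). After this reduction the underlying directed graphs $G'$ and $H'$ are trim. By Lemma~\ref{lemma.homshifts}, $X(G)$ is conjugate to $X_{G'}$ and $X(H)$ to $X_{H'}$, so a conjugacy $X(G)\cong X(H)$ gives a conjugacy $X_{G'}\cong X_{H'}$. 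Proposition~\ref{propositionWilliams2}, in the trim form announced just after its statement, then tells me that $G'$ and $H'$ have the same total amalgamation $K$, up to a renaming of vertices.

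The technical core is to describe this total amalgamation explicitly. On the vertex set of $G$ I would define $s\approx t$ iff $N(s)=N(t)$, where $N(s)$ is the set of neighbours of $s$ and contains $s$ itself exactly when $s$ carries a self-loop; equivalently, the columns of index $s$ and $t$ in the adjacency matrix coincide. Since the matrix is symmetric, each $N(s)$ is a union of $\approx$-classes. Performing the general merging of all $\approx$-classes $C_1,\dots,C_m$ at once and writing $W_i=N(s)$ for $s\in C_i$, a direct computation gives $K_{[C_i],[C_j]}=|C_i|$ when $C_j\subseteq W_i$ and $0$ otherwise. The key point I must verify is that no further merging is possible: two columns $[C_j]$ and $[C_{j'}]$ of $K$ agree iff $C_j$ and $C_{j'}$ are contained in exactly the same sets $W_i$, and using the symmetry equivalence $C_j\subseteq W_i\iff C_i\subseteq W_j$ together with $\bigcup\{C_i:C_i\subseteq W_j\}=W_j$, this forces $W_j=W_{j'}$, hence $C_j=C_{j'}$. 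Thus the single round of general merging already produces a graph with no two equal columns, so by uniqueness it is the total amalgamation of $G'$. In particular the regular row value at $[C_i]$ equals $|C_i|$, and $[C_i]\sim[C_j]$ in $K$ iff the classes are adjacent in $G$.

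From $K$ the graph $G$ is then recovered up to isomorphism. Reading $K$ off, the vertices of $G$ are partitioned into classes of sizes $n_{[C_i]}=|C_i|$; two vertices are adjacent iff their classes are adjacent in $K$, a pair inside $C_i$ being adjacent (with self-loops) exactly when $K$ has a self-loop at $[C_i]$. This is precisely the simple undirected graph produced by the construction of Proposition~\ref{propositionRegular2} applied to $K$, where each vertex $[C_i]$ is split into $n_{[C_i]}$ copies and all cross edges between adjacent classes are inserted. Hence the isomorphism type of $G$ depends only on the isomorphism type of $K$. Since $G'$ and $H'$ share the total amalgamation $K$, the same reconstruction yields both $G$ and $H$, whence $G\cong H$ as undirected graphs.

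I expect the main obstacle to be the explicit determination of the total amalgamation in the middle step, and specifically the proof that distinct neighbourhood-classes remain distinguishable after merging. That is what guarantees both that the amalgamation terminates in one round and that the class sizes are faithfully recorded as the regular multiplicities $n_{[C_i]}$, which is exactly the information Proposition~\ref{propositionRegular2} needs in order to invert the process. Everything else is bookkeeping resting on Lemma~\ref{lemma.homshifts}, Williams's theorem, and the de-amalgamation of Proposition~\ref{propositionRegular2}.
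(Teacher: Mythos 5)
Your proposal is correct and follows essentially the same route as the paper's proof: pass to the underlying directed graphs, invoke the trim form of Williams's theorem to get a common total amalgamation, show that merging the neighbourhood-equivalence classes in a single round already yields that total amalgamation (with entries $0$ or $|C_i|$, and no further merging possible by the symmetry argument), and then read the class sizes and adjacencies back off to recover the graph. Your preliminary removal of isolated vertices is a legitimate refinement of a point the paper glosses over when it asserts that $G'$ and $H'$ are trim ``by construction.''
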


\begin{proof} 
Let $G'$ and $H'$ be the underlying directed graphs of $G$ and $H$, respectively. The graphs $G'$ and $H'$ are regular and trim by construction. By Lemma \ref{lemma.homshifts}, $X(G)$ (resp. $X(H)$) is conjugate to $X_{G'}$ (resp. $X_{H'}$). Let $M$ (resp. $M'$) be the adjacency matrix of $G'$ (reps. $H'$). The matrices $M$ and $M'$ have $0\text{-}1$ entries.
By Proposition \ref{proposition.amalgamation1}, if $X(G)$ and $X(H)$ are conjugate, then 
there is a regular directed graph $K$ such that
$K$ is the total amalgamation of $G'$ and is the total amalgamation of $H'$.

We define the partition $(V_1, \ldots, V_\ell)$ of the vertices of $G'$ where the sets $V_i$
are the equivalence classes of the equivalence relation $t \sim u$ if and only if the columns of $M$ corresponding to $t$ and $u$ are identical.
Let us compute an initial general amalgamation $K'$ of $G'$ defined by this partition.
All states of $V_i$ are merged into a single state $s_{i}$ in~$K'$.
Since $G'$ is regular and $M$ has $0\text{-}1$ entries, if $M_{s, t} = 1$ for some $s \in V_i,  t \in V_j$, then $M_{s, t'} = 1$ for all $t' \in V_j$, and $M_{t, s'} = 1$
for all $s'\in V_i$. 
Thus, whenever $M_{s, t} = 1$ for some $s \in V_i, t \in V_j$, then $M_{s', t'} = 1$ for all $s' \in V_i, t' \in V_j$.

Hence, the adjacency matrix \( N \) of the general amalgamation \( K' \) satisfies either \( N_{s_k, s_i} = 0 \) or \( N_{s_k, s_i} = n_k \), where \( n_k = |V_k| \).

Assume that the two (distinct) states $s_i$ and $s_j$ of $K'$ can be merged through another out-merging after this first round. This implies that $N_{s_k, s_i} = N_{s_k, s_j}$ for each state $s_k$ of $K'$.
Hence, $M_{s, t} = M_{s, u}$ for each state $s$ of $G'$, each state $t$ of $V_i$, and each state $u$ of $V_j$. This implies $s_i = s_j$ by definition of the partition $(V_1, \ldots, V_\ell)$.   
As a consequence, no other merging can be performed, and $K'$ is the total amalgamation $K$ of $G'$. 

Similarly, if $(W_1, \ldots, W_{\ell'})$ is the partition of the vertices of $H'$ corresponding to the equivalence classes of the equivalence relation $t \sim u$ if and only the columns of $M'$ corresponding to $t$ and $u$ are identical.
The total amalgamation $K$ of $H'$ is obtained by merging all states in each $W_i$.
After renumbering the sets $W_i$, we may assume that all states of $W_i$ are merged
into $s_i$ in $K$. 

Hence, $\ell = \ell'$. Further, $N_{s_k, s_i} = |V_k|$ and $N_{s_k, s_i} = |W_k|$, implying
$|V_k|= |W_k|$ for each $k$. After renaming the states, we may thus assume that $V_k = W_k$ for each~$k$. As a consequence, $M = M'$, implying $G' = H'$.
\end{proof}

\section{Tree-shifts }\label{TreeShift}
Let $\Sigma=\{0,1,\ldots, d-1\}$ be a finite alphabet of cardinality $d$.
An \emph{(infinite) tree}~$t$, with nodes labeled on a finite alphabet $A$, is a total
function from $\Sigma^*$ to~$A$.  A node is a word of $\Sigma^*$. The empty
word, denoted by $\varepsilon$, corresponds to the root of the tree. 
If $x$ is a node, its children are $xi$ with $i \in
\Sigma$. 
If $t$ is a tree and $x$ is a node, $t(x)$ is sometimes
denoted by $t_x$. 
A \emph{(infinite) path} in a tree $t$ is a sequence
$(t_{x_n})_{n\geq 0}$ where $x_{n+1} \in 
x_n\Sigma$ for any $n \geq 0$.   
We let $T(A)$ denote the set of all infinite trees labeled on $A$.

For each $i \in \Sigma$, we define the shift transformation $\sigma_i \colon T(A) \to T(A)$
as follows. 
If $t$ is a tree, $\sigma_i(t)$ is the
tree rooted at the $i$-th child of $t$, i.e. $\sigma_i(t)_x=
t_{ix}$ for any $x \in \Sigma^*$. 
The set $T(A)$ equipped with these
shift transformations is called the \emph{full shift} of
infinite trees over $A$.

A \emph{pattern} of a tree $t$ is a restriction of $t$ to a subset $L$ of $\Sigma^*$, 
where $L$ is called the \emph{support} of the pattern.
A \emph{block} of \emph{height} $k \geq 1$
on $A$ is a function $b \colon \Sigma^{k-1} \to A$.
We allow the block of height $0$ denoted by $\varepsilon$.
If $b$ is a block of height $k$, we let $b^{(k')}$ denote the subblock of $b$ of height $k'$ rooted at the root of $b$, if $k' \leq k$.

If $t$ is a tree on $A$ and $x \in \Sigma^*$ is a node, we let $t^{(k)}_x$ denote the block of height $k$ rooted at $x$, that is, the block $b$ such that $t(xy) = b(y)$ for any $y \in \Sigma^{k-1}$. 
We let $t^{(k)}$ denote the block of height $k$ at the root of $t$.
We say that a block $b$ \emph{is a block of height $k$ of a tree} $t$ if there is a node $x \in \Sigma^*$
such that $b = t^{(k)}_x$; otherwise, $b$ \emph{avoids} $t$. The empty block is a block of any tree.
If $X$ is a tree-shift, let $\cB_k(X)$ be the set of blocks of height $k$ of all trees of $X$.

If $b$ is a block of height $k \geq 2$ and $i \in  \Sigma$, we let $\sigma_i(b)$ denote its subblock 
of height $k{-}1$ rooted at $i$. 
A block $b$ of height $k\geq 2$ is denoted by $(b_\varepsilon, \sigma_1(b), \sigma_0(b))$.

A \emph{tree-shift} $X$ on $A$ is the set $\XS_F$, where $F$ is a set of blocks, consisting of all trees that avoid each block in $F$. 
A \emph{tree-shift of finite type} $X$ in $T(A)$ is a tree-shift $\XS_F$ where $F$ is a finite set of blocks.

Let $X$ be a tree-shift on the alphabet $A$, and let $B$ be another alphabet.
Given a nonnegative integer $n$,
a \emph{block map}
is a map $f\colon \cB_{n}(X)\to B$.
The \emph{sliding block code}
defined by $f$ is the map $\varphi:X\to T(B)$ defined by
$\varphi(t)=t'$ if for every $x \in \Sigma^*$,
\begin{displaymath}
  t'_x=f(t^{(n)}_x).
\end{displaymath}

The integer $n{-}1$ is the \emph{height anticipation} of $\varphi$.
The set $\varphi(X)$ is tree-shift.
A \emph{conjugacy} is an invertible sliding block code. Its inverse is also a sliding block code.

Let $X$ be a tree-shift on $A$ and let $k \geq 1$ be an integer. 
The map $\gamma_k \colon X\to T(\cB_k(X))$ defined for $t\in X$ by $t'=\gamma_k(t)$
if for every $x \in \Sigma^*$, 
\begin{equation}
t'_x = t^{(k)}_x,
\end{equation}
is the $k$-th \emph{higher block code} on $X$.
One also says that $\gamma_k$ is a \emph{coding by overlapping blocks} of length $k$.
The set $X^{(k)}=\gamma_k(X)$ is a tree-shift on $\cB_k(X)$, called the $k$-th
\emph{higher block shift} of $X$, or the $k$-th
\emph{higher block presentation}
of~$X$.

The following result is well known.
\begin{proposition}
The higher block code $\gamma_k \colon X\to T(\cB_k(X))$ is an isomorphism of tree-shifts
and the inverse of $\gamma_k$ is the block map $\pi_k \colon T(\cB_k(X)) \to T(A)$
defined by $g \colon \cB_k(X) \to A$ with $g(b ) = b(\varepsilon)$ (the letter label of the root of the block b).
\end{proposition}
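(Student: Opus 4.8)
The plan is to verify directly that $\gamma_k$ and $\pi_k$ are mutually inverse sliding block codes; the isomorphism claim then follows, since sliding block codes are continuous and commute with every shift transformation $\sigma_i$. So the real content is a single pointwise computation, and everything else is formal.

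First I would record that both maps are sliding block codes. The map $\gamma_k \colon X \to T(\cB_k(X))$ is exactly the sliding block code on $X$ associated with the block map $\cB_k(X) \to \cB_k(X)$ equal to the identity (height anticipation $k-1$): indeed $\gamma_k(t)_x = t^{(k)}_x$ is the defining formula with $f = \mathrm{id}$. Consequently its image $X^{(k)} = \gamma_k(X)$ is a tree-shift. Likewise $\pi_k$ is the sliding block code on $T(\cB_k(X))$ associated with the height-$1$ block map $g \colon \cB_k(X) \to A$, $g(b) = b(\varepsilon)$ (height anticipation $0$), a height-$1$ block over the alphabet $\cB_k(X)$ being identified with a single symbol of $\cB_k(X)$. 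In particular both $\gamma_k$ and $\pi_k$ are continuous and commute with all the $\sigma_i$.

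The core step is to show $\pi_k \circ \gamma_k = \mathrm{id}_X$. Fix $t \in X$ and put $t' = \gamma_k(t)$, so $t'_x = t^{(k)}_x$ for every node $x$. Applying $\pi_k$ gives, for each $x$,
\[
(\pi_k \circ \gamma_k)(t)_x = g(t'_x) = g\bigl(t^{(k)}_x\bigr) = t^{(k)}_x(\varepsilon).
\]
By definition $t^{(k)}_x$ is the block $b$ with $t(xy) = b(y)$ for all $y \in \Sigma^{k-1}$; evaluating at $y = \varepsilon$ yields $t^{(k)}_x(\varepsilon) = t(x) = t_x$. Hence $(\pi_k \circ \gamma_k)(t) = t$, so $\pi_k \circ \gamma_k = \mathrm{id}_X$.

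From this identity I would conclude formally. Having a left inverse, $\gamma_k$ is injective, and it is surjective onto $X^{(k)}$ by the definition $X^{(k)} = \gamma_k(X)$, so $\gamma_k \colon X \to X^{(k)}$ is a bijection. Composing $\pi_k \circ \gamma_k = \mathrm{id}_X$ on the right with $\gamma_k^{-1}$ shows $\gamma_k^{-1} = \pi_k|_{X^{(k)}}$; in particular $\pi_k(X^{(k)}) = X$, so $\pi_k$ does restrict to a map $X^{(k)} \to X$. Since $\gamma_k$ and its inverse $\pi_k$ are both sliding block codes, hence continuous and shift-commuting, $\gamma_k$ is a conjugacy of tree-shifts with inverse $\pi_k$. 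I do not expect a genuine obstacle here: the only step requiring care is the indexing in $t^{(k)}_x(\varepsilon) = t_x$ — that the root label of the height-$k$ block rooted at $x$ is $t_x$, which is just the case $y = \varepsilon$ of the defining relation of $t^{(k)}_x$ — while the remainder is bookkeeping about one-sided inverses.
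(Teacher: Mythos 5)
Your proof is correct: the key computation $(\pi_k\circ\gamma_k)(t)_x = t^{(k)}_x(\varepsilon) = t_x$ is exactly the point, and the rest (injectivity from the left inverse, surjectivity onto $X^{(k)}$ by definition, continuity and shift-commutation because both maps are sliding block codes) is the standard bookkeeping. The paper states this proposition as well known and gives no proof, so there is no alternative argument to compare against; your direct verification is the expected one.
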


\begin{example}\label{treeEx}

Let $T = T(A)$, where $A =\{a, b\}$ is the set of complete binary trees with nodes labeled on $\{a, b \}$.
A block of height $2$ in $\mathcal{B}_2(T)$ is denoted $b = (b_\varepsilon, b_1, b_0)$.
Let $F$ be the set of blocks of height $2$ containing $(b, b, b)$, $(b, b, a)$, and $(b, a, b)$.
Hence, no path in any tree in $\XS_F$ has two consecutive $b$'s.
In Figure \ref{figure.tree}, the child $x0$ of a node $x$ is represented under $x$ on the right,
and the child $x1$ of a node $x$ is represented under $x$ on the left. 

\tikzset{node distance=1cm, 
            every state/.style={ 
                  semithick,
                  minimum size=0.0mm},
            double distance=2pt, 
            every edge/.style={ 
                   draw,
                   -,
                   auto,
            semithick}}
\begin{figure}[htpb]
\centering
\begin{tikzpicture} 
      \node[state] (e) at (0,0) {$b$};
      \node[state] (0) at (1,-1) {$a$};
      \node[state] (1) at (-1,-1) {$a$};
      \node[state] (00) at (1.5,-2) {$a$};
      \node[state] (01) at (0.5,-2) {$b$};
      \node[state] (10) at (-0.5,-2) {$b$};
      \node[state] (11) at (-1.5,-2) {$b$};
      \node[] (000) at (1.75,-3) {};
      \node[] (001) at (1.25,-3) {};
      \node[] (010) at (0.75,-3) {};
      \node[] (011) at (0.25,-3) {};
      \node[] (100) at (-0.25,-3) {};
      \node[] (101) at (-0.75,-3) {};
      \node[] (110) at (-1.25,-3) {};
      \node[] (111) at (-1.75,-3) {};
      \draw (e) edge node {} (0);
      \draw (e) edge node {} (1);
      \draw (0) edge node {} (00);
      \draw (0) edge node {} (01);
      \draw (1) edge node {} (10);
      \draw (1) edge node {} (11);
      \draw[dashed] (00) edge node {} (000);
      \draw[dashed] (00) edge node {} (001);
      \draw[dashed] (01) edge node {} (010);
      \draw[dashed] (01) edge node {} (011);
      \draw[dashed] (10) edge node {} (100);
      \draw[dashed] (10) edge node {} (101);
      \draw[dashed] (11) edge node {} (110);
      \draw[dashed] (11) edge node {} (111);
\end{tikzpicture} 
\caption{A tree in $\XS_\mathcal{F}$.} \label{figure.tree}
\end{figure}
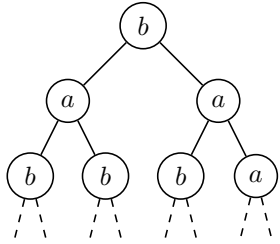
\end{example}

\subsection{Top-down tree automata and out-splitting} 

To simplify the notation, we consider binary trees below, i.e., we assume that $\Sigma= \{0, 1\}$. However, all results hold for any $d$, particularly for $d = 1$, extending the results for one-sided shifts of sequences.

A \emph{(top-down) tree automaton} on an alphabet $A$ is a structure $\cA = (Q, \Delta)$, where $Q$ is a finite set of states and $\Delta$ a finite set of transitions $(p, a) \to(q, r)$, with $p, q, r \in Q$, $a \in A$, $a$ being the label of the transition (see for instance \cite{Tata2021}).
A transition $(p, a) \to(q, r)$, also denoted $p \xrightarrow{a}(q, r)$,
is said to be an \emph{outgoing transition} of $p$,
a \emph{left incoming transition} of $q$ and a \emph{right incoming transition} of $r$. 

A tree $t$ labeled on $A$ is \emph{accepted} by $\cA$ if there is a tree $u$ on $Q$ such that, for any $x \in \Sigma^*$, 
$(u_x, t_x) \to (u_{x1}, u_{x0})$ belongs to $\Delta$. Such a tree $u$ is called a \emph{computation tree} for $t$.
A tree is a computation tree of $\cA$ if it is a computation tree of some tree accepted by $\cA$.
The set of trees accepted by $\cA$ is a tree-shift denoted by $X_\cA$.

A tree automaton is \emph{trim} if each state of the automaton has at least one outgoing transition. 

An \emph{edge tree automaton} is a tree automaton such that all transitions have distinct labels. 
An \emph{edge tree-shift} is the set of trees accepted by an edge tree automaton. 

The following result is well known.
\begin{proposition}
A tree-shift of finite type is conjugate to an edge tree shift.
\end{proposition}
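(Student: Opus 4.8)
The plan is to mirror the classical argument for one-sided sequences: any shift of finite type is first reduced to a nearest-neighbour condition by a higher block recoding, and is then recoded once more so that the admissibility relation becomes the transition set of an automaton with pairwise distinct labels.

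First I would normalize the forbidden set. Let $X=\XS_F$ with $F$ finite, and let $k\geq 2$ be the maximal height occurring in $F$ (padding the alphabet if all forbidden blocks have height $1$). Replacing each $b\in F$ of height $j$ by all height-$k$ blocks whose sub-block of height $j$ at the root equals $b$, I may assume every block of $F$ has height exactly $k$; this preserves $X$ because in an infinite tree every node is the root of some height-$k$ block, so forbidding $b$ at a node is the same as forbidding all its height-$k$ completions there. I then pass to the higher block presentation $X^{(k-1)}=\gamma_{k-1}(X)$ over the alphabet $\cB_{k-1}(X)$. By the isomorphism property of the higher block code, $X$ is conjugate to $X^{(k-1)}$, so it suffices to realize $X^{(k-1)}$ as an edge tree-shift.

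The key point is that membership in $X^{(k-1)}$ is a purely local, height-$2$ condition. A tree $s$ labeled on $\cB_{k-1}(X)$ lies in $X^{(k-1)}$ if and only if, at every node $x$, the triple $(s_x,s_{x1},s_{x0})$ is \emph{admissible}: the three height-$(k-1)$ blocks overlap consistently (for $k\geq 3$, $\sigma_1(s_x)=s_{x1}^{(k-2)}$ and $\sigma_0(s_x)=s_{x0}^{(k-2)}$, the condition being vacuous when $k=2$) and the resulting height-$k$ block rooted at $x$ does not belong to $F$. Writing $\Delta$ for this set of admissible height-$2$ blocks, $X^{(k-1)}$ is exactly the vertex tree-shift specified by $\Delta$. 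Finally I would turn $\Delta$ into an edge tree automaton by labeling each transition by itself: let $\cA$ have states $Q=\cB_{k-1}(X)$, label alphabet $B=\Delta$, and, for each $\delta=(P,P_1,P_0)\in\Delta$, the transition $P\xrightarrow{\delta}(P_1,P_0)$. Distinct admissible blocks carry distinct labels, so $\cA$ is an edge tree automaton. The two recodings are: from $X^{(k-1)}$ to $X_\cA$, the height-$2$ sliding block code sending $s$ to the tree whose label at $x$ is $(s_x,s_{x1},s_{x0})\in\Delta$; and from $X_\cA$ back to $X^{(k-1)}$, the height-$1$ block map reading off the source state $P$ of the transition labeling each node. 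Because labels are distinct, a label-tree of $\cA$ determines its computation tree uniquely, so these two maps are mutually inverse, giving a conjugacy $X^{(k-1)}\cong X_\cA$; composing with $\gamma_{k-1}$ yields the conjugacy of $X$ onto an edge tree-shift.

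I expect the only genuine work to lie in the second step, namely verifying that overlap consistency together with avoidance of $F$ is precisely captured by the relation $\Delta$, and checking the low-height edge cases ($k\in\{1,2\}$) of the padding. The recoding in the last step is then routine: the distinctness of labels makes the inverse block map immediate and the verification that the two maps are inverse sliding block codes a direct computation.
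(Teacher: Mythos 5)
Your proposal is correct and follows essentially the same route as the paper: normalize $F$ to a single height, pass to a higher block presentation, and build an edge tree automaton whose states are the blocks of one height and whose transitions are labeled by the admissible blocks of the next height (your triples $(P,P_1,P_0)$ are in bijection with the paper's labels $b\in\cB_{k+1}(X)\setminus F$). You merely make explicit two points the paper leaves implicit, namely the padding of $F$ and the mutually inverse sliding block codes.
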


\begin{proof}
Let $X = \XS_F$ be a tree shift of finite type, where $F$ is a set of blocks of height $k{+}1$ with $k \geq 1$. It is conjugate to the tree shift $Y =  \gamma_k$(X) on $\cB_k(X)$. Let us show that $Y$ is an edge tree shift. 
Let  $\cA = (\cB_k(X), \Delta)$ be the tree automaton defined as follows.
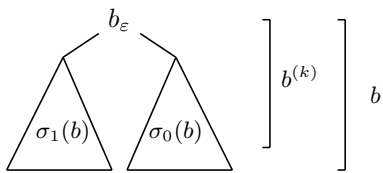
\begin{figure}[htpb]
\centering
   \tikzset{node distance=0cm,
            every edge/.style={ 
                   draw,
                   -,
                   auto,
            semithick}}
\begin{tikzpicture} 
      \node[] (e) at (0,0) {$b_\varepsilon$};
      \coordinate (0) at (0.75,-0.5) {};
      \coordinate (1) at (-0.75,-0.5);
      \coordinate (00) at (1.5,-2);
      \coordinate (01) at (0.1,-2);
      \coordinate (10) at (-0.1,-2);
      \coordinate (11) at (-1.5,-2);
      \draw (e) edge node {} (0);
      \draw (e) edge node {} (1);
      \draw (0) edge node {} (00);
      \draw (0) edge node {} (01);
      \draw (1) edge node {} (10);
      \draw (1) edge node {} (11);
      \draw (00) edge node {} (01);
      \draw (10) edge node {} (11);
      \node at (0.75,-1.5) {{\small $\sigma_0(b)$}};
      \node at (-0.75,-1.5) {{\small $\sigma_1(b)$}};
     \coordinate (A) at (2,0) {};
     \coordinate (B) at (2,-1.7) {};
     \coordinate (AA) at (1.9,0) {};
     \coordinate (BB) at (1.9,-1.7) {};
     \draw (A) edge node {} (B);
      \draw (A) edge node {} (AA);
       \draw (B) edge node {} (BB);
     \node at (2.4,-0.75) {{\small $b^{(k)}$}};
     \coordinate (C) at (3,0) {};
     \coordinate (D) at (3,-2) {};
     \coordinate (CC) at (2.9,0) {};
     \coordinate (DD) at (2.9,-2) {};
     \draw (C) edge node {} (D);
      \draw (C) edge node {} (CC);
       \draw (D) edge node {} (DD);
     \node at (3.4,-1) {{\small $b$}};
\end{tikzpicture} \label{figure.treeautomaton}
\caption{The block $b \notin F$ of height $k{+}1$.}
\end{figure}

For each block $b = (b_\varepsilon, \sigma_1(b), \sigma_0(b)) \in \cB_{k+1}(X) \setminus  F$,
there is a transition $(b^{(k)}, b)  \to (\sigma_1(b), \sigma_0(b))$ in $\Delta$.
Note that all transitions have distinct labels~$b$.
The automaton $\cA$ is an edge tree automaton that accepts~$\XS_F$.
\end{proof}
Since an edge tree automaton has all its transitions having distinct labels, we may represent all transitions $(p, a) \to(q, r)$ going from a state $p$ to a pair of states $(q, r)$ by $p \xrightarrow{n}(q, r)$, where $n \geq 1$ is the number of transitions from $p$ to $(q, r)$. 
We define the \emph{transition matrix} $M$ of such an automaton by $M_{p, (q,r)} = n$ if there is a transition $p \xrightarrow{n}(q, r)$.

We define the notion of \emph{out-splitting} of an edge tree automaton as follows.

An out-splitting transforms an edge tree automaton $\cA=(Q, \Delta)$ into an edge tree automaton $\cA'=(Q', \Delta')$ that is obtained from
a partition of the set of transitions going out of a state $s$  into two nonempty sets $\Delta_1$ and $\Delta_2$. 
The set of states of 
automaton $\cA'$ is $Q' = Q \setminus \{s\} \cup \{s_1, s_2\}$. 
Let $\mu(q)$ denote the set $\{q\}$ if $q \neq s$, and $\mu(s)$ denotes  $\{s_1,s_2\}$.
The transitions of $\Delta'$ are the following:
\begin{itemize}
\item $(p, a_{(q', r')}) \to (q', r')$ if $(q',r') \in \mu(q) \times \mu(r), p \neq s, (p, a) \to (q, r) \in \Delta$;
\item $(s_i,a_{(q', r')})  \to (q', r')$  if $(q',r') \in \mu(q) \times \mu(r)$, $(s, a) \to (q, r) \in \Delta_i$.
\end{itemize}
Note that all transitions have distinct labels.

The inverse operation is called an out-merging. We define the notion of \emph{general out-splitting} similarly. The outgoing transitions from a given state can be partitioned into arbitrary many subsets instead of just two, and the partitioning can occur simultaneously at all of the states instead of just one.  

Let $\cA'=(Q',\Delta')$ be an edge tree automaton and $M'$ its transition matrix. 
Let $(Q'_1, Q'_2, \ldots, Q'_k)$ be a partition of $Q'$ into classes
such that if $q_i, q_j$ belong to the same class, then for each state $p$ of $Q'$
\begin{enumerate}
\item the columns of indices $(q_i,p)$ and $(q_j,p)$ of $M'$ are identical, \label{condition.merge.1}
\item the columns of indices $(p, q_i)$ and $(p, q_j)$ of $M'$ are identical.\label{condition.merge.2}
\end{enumerate}
We can define an edge tree automaton $\cA=(Q, \Delta)$ of transition matrix $M$ obtained by merging all states of each $Q'_i = \{q_{i,1}, \ldots q_{i,k_i}\}$ into a single state
$q_{i,1}$ whose row in $M$ is obtained by summing of the rows $q_{i,1}, \ldots q_{i,k_i}$ of $M'$ and removing the columns $(p, q_{i,2}), \cdots, (p, q_{i,k_i})$ and $(q_{i,2}, p), \cdots, (q_{i,k_i}, p)$ for each $p \in Q'$.

Note that the labels of the transitions of $\cA'$ are all distinct
since the labels of the transitions of $\cA$ are all distinct, and $\cA$ is defined up to a renaming of these labels.
The automaton $\cA$ is called a \emph{general amalgamation} of $\cA'$.

\begin{proposition} \label{proposition.tree.splitting}
Suppose $X$ is the edge tree-shift defined by an edge tree automaton $\cA$, and $Y$ is the edge tree-shift defined by the edge tree automaton $\cA'$ that is an out-splitting automaton of $\cA$.  Then $X$ and $Y$ are conjugate.
\end{proposition}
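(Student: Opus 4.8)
The plan is to exhibit the conjugacy and its inverse explicitly as sliding block codes, mirroring the graph computation carried out earlier for one-sided edge shifts but now reading labels off the transitions of the tree automata. Since $\cA$ is an edge tree automaton, every label $a \in A$ occurs in a unique transition $(p,a)\to(q,r)$, so each tree $t \in X$ determines its computation tree $u$ on $Q$ by reading source states off labels; the same holds for $\cA'$ and $Y$. Write $A'$ for the relabelled alphabet $\{a_{(q',r')}\}$ of $\cA'$, and let $\pi$ denote the merging map on states with $\pi(s_1)=\pi(s_2)=s$ and $\pi$ the identity elsewhere.

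First I would define the forward map $\varphi \colon X \to Y$ by a block map $f \colon \cB_2(X) \to A'$, of height anticipation $1$. Given a height-$2$ block $b=(b_\varepsilon, \sigma_1(b), \sigma_0(b))$ whose root label $b_\varepsilon = a$ carries the transition $(p,a)\to(q,r)$, set $f(b) = a_{(q',r')}$, where $q' = q$ if $q \neq s$ and otherwise $q' = s_i$ for the index $i$ with the unique transition labeled $\sigma_1(b)$ (necessarily outgoing from $s$) lying in $\Delta_i$; define $r'$ symmetrically from $\sigma_0(b)$. Well-definedness uses that in a genuine block of $X$, $q = s$ forces $\sigma_1(b)$ to label an outgoing transition of $s$, so the lookahead letter really does select a class of the partition $\{\Delta_1,\Delta_2\}$. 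For the inverse I would take the $1$-block map $g \colon A' \to A$, $g(a_{(q',r')}) = a$, which simply forgets the target decoration; it defines $\psi \colon Y \to X$ with height anticipation $0$.

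Then I would verify the three required facts. To see $\varphi(X) \subseteq Y$, I build the computation tree $u'$ for $t' = \varphi(t)$ by putting $u'_x = u_x$ when $u_x \neq s$ and $u'_x = s_i$ (from the partition class of the outgoing transition at $x$) when $u_x = s$; by construction the decoration $(q',r')$ produced by $f$ at $x$ equals $(u'_{x1}, u'_{x0})$, and the source matches $u'_x$, so $(u'_x, t'_x)\to(u'_{x1},u'_{x0}) \in \Delta'$. For $\psi(Y) \subseteq X$ I apply $\pi$ to the computation tree of a tree $t' \in Y$ and check that $\pi$ sends each transition of $\Delta'$ back to its originating transition of $\Delta$, using $\pi(q') = q$ and $\pi(r') = r$ for $q' \in \mu(q)$, $r' \in \mu(r)$. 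Finally $\psi\circ\varphi = \mathrm{id}_X$ is immediate since $g$ undoes the decoration, and $\varphi\circ\psi = \mathrm{id}_Y$ follows because in an accepted tree the decoration of $t'_x$ already equals $(u'_{x1},u'_{x0})$, which is exactly what $f$ reconstructs.

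The main obstacle will be the bookkeeping around the transitions in which $s$ occurs both as source and as one (or both) of the targets — the tree analog of the self-loops that forced the special clauses in the graph case. There one must check simultaneously that the source copy of $s$ at a node is selected by that node's own outgoing transition, while the target copies of $s$ at its two children are selected by the children's outgoing transitions, and that these independent choices are all consistent with a single transition of $\Delta'$. This is where I would be most careful, though it is a finite case analysis rather than a genuine difficulty; the general (rather than binary) out-splitting is handled identically, and the whole argument is uniform in the arity $d$.
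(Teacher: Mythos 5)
Your proposal is correct and follows essentially the same route as the paper: a height-anticipation-$1$ block map $f$ on $\cB_2(X)$ that decorates the root label with the split copies of the child states (determined by which partition class the children's outgoing transitions lie in), inverted by the $1$-block map that forgets the decoration, with the same verification via computation trees. The paper merely writes out explicitly the eight cases according to which of $p,q,r$ equal $s$, which your uniform formulation subsumes.
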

\begin{proof}
Let us assume that $\cA'= (Q', \Delta')$ is obtained from $\cA= (Q, \Delta)$ by splitting a state $s$ into two states $s_1, s_2$, and partitioning the transitions going out of $s$ into $\Delta_1$ and $\Delta_2$. 
Recall that all transitions of $\cA$ and $\cA'$ have distinct labels. 

We define a sliding block code $\varphi \colon X \to Y$ from a block map 
$f \colon \cB_2(X) \to \cB(Y)$
as follows.

For each block $(a, b, c)$ of height $2$ of $X$, there are unique transitions 
$(p, a) \to (q, r)$,  $(q, b) \to (t, u)$, and $(r, c) \to (v, w)$ in $\cA$.
Below, let $i, j \in \{ 1,2 \}$.

If $p,q,r \neq s$, then we set $f(a, b, c) = a_{(q, r)}$.

If $p, q \neq s$, $r = s$ and $(s, c) \to (v, w) \in \Delta_i$, then we set $f(a, b, c) = a_{(q, s_i)}$.

If $p, r \neq s$, $q = s$ and $(s, b) \to (t, u) \in \Delta_i$, then we set $f(a, b, c) = a_{(s_i, r)}$.

If $p \neq s$, $q = r = s$, $(s, b) \to (t, u) \in \Delta_i$, and $(s, c) \to (v, w) \in \Delta_j$,
then we set $f(a, b, c) = a_{(s_i, s_j)}$.

If $p = s$, $q, r \neq s$ and $(p,a) \to (q, r) \in \Delta_i$, then we set $f(a, b, c) = a_{(q, r)}$.

If $p = q = s$, $r \neq s$, and $(s, b) \to (t, u) \in \Delta_i$,
then we set $f(a, b, c) = a_{(s_i, r)}$.

If $p = r = s$, $q \neq s$, and $(s, c) \to (v, w) \in \Delta_i$,
then we set $f(a, b, c) = a_{(q, s_i)}$.

If $p = q = r = s$, $(s, b) \to (t, u) \in \Delta_i$, and $(s, c) \to (v, w) \in \Delta_j$,
then we set $f(a, b, c) = a_{(s_i, s_j)}$.

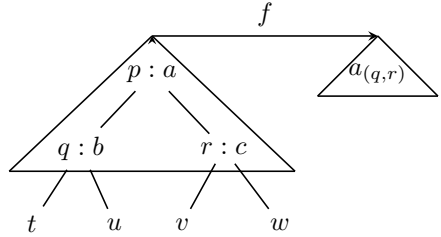
\begin{figure}[htpb]
\centering
   \tikzset{node distance=0cm, 
            every edge/.style={ 
                   draw,
                   auto,
            semithick}}
\begin{tikzpicture} 
      \node[] (e) at (0,0) {$p: a$};
      \node[] (0) at (0.95,-1) {$r: c$};
      \node[] (1) at (-0.95,-1) {$q: b$};
      \node[] (00) at (1.7,-2) {$w $};
      \node[] (01) at (0.4,-2) {$v $};
      \node[] (10) at (-0.5,-2) {$u $};
      \node[] (11) at (-1.6,-2) {$t $};
      \draw (e) edge node {} (0);
      \draw (e) edge node {} (1);
      \draw (0) edge node {} (00);
      \draw (0) edge node {} (01);
      \draw (1) edge node {} (10);
      \draw (1) edge node {} (11);
     \coordinate (A) at (0,0.5) {};
     \coordinate (B) at (1.9,-1.3) {};
     \coordinate (C) at (-1.9,-1.3) {};
     \draw (A) edge node {} (B);
      \draw (B) edge node {} (C);
       \draw (A) edge node {} (C);
       \node[] (f) at (3,0) {$a_{(q,r)}$};
   \coordinate (AA) at (3,0.5) {};
     \coordinate (BB) at (3.8,-0.3) {};
     \coordinate (CC) at (2.2,-0.3) {};
     \draw (AA) edge node {} (BB);
      \draw (BB) edge node {} (CC);
       \draw (AA) edge node {} (CC);
       \draw[->,>=stealth, auto] (A) edge node {$f$} (AA);
\end{tikzpicture} \label{figure.treesblockmap}
\caption{The $2$-block map $f$ when $p, q, r \neq s$.}
\end{figure}
The sliding block map $\varphi$ is clearly a conjugacy whose inverse is the sliding block map $\psi$ defined
by the $1$-block map $g \colon \cB(Y) \to \cB(X)$ as follows: $g(a_{(q',r')}) = a$.

Let us show that $\varphi(X) = Y$. 
Let $z \in X$. We first check that $\varphi(z)$ is accepted by $\cA'$. 
By construction, for each $x \in \Sigma^*$, if $z_x^{(2)} = (a, b, c)$ and $f(a, b, c) = a_{(q', r')}$ with $q', r' \in Q'$, then there is $p \in Q'$ such that $(p, a) \to (q', r') \in \Delta'$. Hence, $\varphi(z)$ is accepted by $\cA'$. 
Conversely, if $z' \in Y$, let us show that $z' \in \varphi(X)$.
Let $u'$ be a computation tree of $z'$ in $\cA'$. Then, for each $x \in \Sigma^*$, 
$(u'_x, z_x) \to (u'_{x1}, u'_{x0})$ belongs to $\Delta'$. 
Thus, $z_x = a_{(q', r')}$ for some letter $a$ of the alphabet of $\cA$ and some states $q', r' \in Q'$.
By construction, this implies that $(u'_{x1}, u'_{x0}) = (q', r')$, $g(z_x) = a$, and
$(\pi(u'_x), a) \to (\pi(q'), \pi(r'))$ belongs to $\Delta$,
where $\pi(q'') = q''$ if $q'' \neq s_1, s_2$ and $\pi(s_1) = \pi(s_2) = s$.
Hence, $z = \psi(z')$ is accepted by $\cA$ and $\varphi(z) = z'$.

\end{proof}

For tree-shifts, William's theory for one-sided shifts of finite type holds. 
The same result was obtained in \cite{AubrunBeal2012} with top-down tree automata.
\begin{proposition}  \label{proposition.tree.williams1}
If $\cA$ and $\cB$ are amalgamations of a common edge tree-auto\-maton, then they have a common amalgamation.
\end{proposition}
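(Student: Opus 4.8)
The plan is to mirror Williams's argument for the graph case (Proposition~\ref{propositionWilliams1}) at the level of edge tree automata, organizing all amalgamations through the partitions of the state set that they induce. First I would observe that an amalgamation of $L=(Q,\Delta)$ is entirely determined by the partition $P$ of $Q$ that records which original states get identified: a sequence of elementary out-mergings groups the states of $Q$ into the fibers of the quotient map onto the states of the result, and since the bookkeeping of each step is ``sum the rows of the merged class, keep one copy of each duplicated column,'' the resulting transition matrix depends only on $P$ and not on the order in which the elementary mergings were performed (row-summation is associative and commutative, and column-collapse is idempotent). I would therefore write every amalgamation as a quotient $L/P$ and reduce the proposition to a statement about the family of \emph{admissible} partitions, namely those realizable by out-mergings.

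The main step is then to understand how two such quotients combine. I would characterize admissibility intrinsically: $P$ is admissible precisely when any two $P$-equivalent states are interchangeable as a left child and as a right child \emph{at the level of $P$-classes}, i.e.\ the analogues of conditions~(\ref{condition.merge.1})--(\ref{condition.merge.2}) hold class-wise in $L/P$. The heart of the proof is to show that the admissible partitions are closed under join in the partition lattice of $Q$: if $P_\cA$ and $P_\cB$ are admissible, then so is their join $P=P_\cA\vee P_\cB$ (the transitive closure of $\sim_{P_\cA}\cup\sim_{P_\cB}$). Granting this, I set $\cA=L/P_\cA$ and $\cB=L/P_\cB$, and note that $P$ induces a partition $\bar P_\cA$ on the states of $\cA$ with $\cA/\bar P_\cA\cong L/P$, and likewise $\cB/\bar P_\cB\cong L/P$; checking that $\bar P_\cA$ and $\bar P_\cB$ are themselves admissible (which follows from admissibility of $P$ for $L$) shows that $K=L/P$ is a common amalgamation of $\cA$ and $\cB$, as required. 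An equivalent route would be through Newman's lemma: the relation ``perform one elementary out-merging'' is terminating because each step deletes exactly one state, so it suffices to prove local confluence, that two elementary out-mergings of $L$ admit a common amalgamation; global confluence then gives exactly the statement of the proposition.

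The hard part will be the bookkeeping forced by the two-child structure of the transition matrix, whose columns are indexed by \emph{pairs} $(q,r)$ rather than by single states, as in Williams's graph setting. When two states $s_1,s_2$ are identified one must track their occurrences simultaneously as a left child, as a right child, and as the two components of a single pair such as $(s_1,s_2)$ or $(s_1,s_1)$; the interchangeability hypotheses must be shown to combine correctly in the overlapping situation $s_1\sim s_2$ and $s_2\sim s_3$ (the transitive-closure step for the join, respectively the overlapping case of local confluence), including the degenerate cases where a merged state appears as a child of another merged state. I expect the proof that conditions~(\ref{condition.merge.1})--(\ref{condition.merge.2}) survive both a disjoint merging and the passage to the transitive closure — handled carefully in these self-referential pair cases — to be the sole genuine obstacle; everything else is the routine verification that the amalgamated automaton depends only on the induced partition and that the quotient of a quotient is the quotient by the join.
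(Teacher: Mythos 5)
Your second route (termination of the one-step merging relation plus local confluence) is exactly the structure of the paper's argument, which reduces to two elementary out-mergings of a common automaton $\cC$ and checks by hand that they admit a common amalgamation. But your proposal stops precisely where the paper's proof begins: the verification that two elementary out-mergings commute --- the disjoint case and, crucially, the overlapping case $q'_i=q_i$, where one must show that after merging $q_i,q_j$ into $q_{ij}$ the columns indexed $(q_{ij},p)$ and $(q'_j,p)$ (and $(p,q_{ij})$, $(p,q'_j)$) are still equal, so that a further merging is possible, and that both orders of merging yield the same transition matrix --- is the entire mathematical content of the proposition, and you explicitly defer it (``granting this'', ``the sole genuine obstacle''). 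Naming the hard step is not the same as carrying it out, so as it stands this is a plan rather than a proof.

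Your first route has an additional, more serious problem: the ``intrinsic characterization'' of admissible partitions is asserted, not established, and it is not obviously true. A partition realizable by a \emph{sequence} of elementary out-mergings need not satisfy conditions~(\ref{condition.merge.1})--(\ref{condition.merge.2}) class-wise on the original matrix: an intermediate merging sums rows, and two columns that differ in $L$ can become equal only after such a row-sum (take $M_{s_1,(t_1,r)}=1$, $M_{s_2,(t_1,r)}=0$, $M_{s_1,(t_2,r)}=0$, $M_{s_2,(t_2,r)}=1$, with $s_1,s_2$ merged first). So admissibility is genuinely a property of sequences of mergings, and the claim that admissible partitions are closed under join in the partition lattice is essentially a restatement of Proposition~\ref{proposition.tree.williams1} rather than a reduction of it. To complete the argument, follow your second route and actually perform the column bookkeeping for the pair-indexed columns in the overlapping case, as the paper does.
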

\begin{proof}
Assume that $\cA$ (respectively $\cB$) is obtained by merging states $q_i, q_j$  (respectively $q'_i, q'_j$) of an edge tree-automaton $\cC$
into a state denoted $q_{ij}$  (respectively $q'_{ij}$).  Thus, the columns of indices $(q_i,p)$ and $(q_j ,p)$ (respectively $(p, q_i)$ and $(p, q_j)$)
of the adjacency matrix of $\cC$ are identical, and the columns of indices $(q'_i,p)$ and $(q'_j ,p)$  (respectively $(p, q'_i)$ and $(p, q'_j)$)
of the adjacency matrix of $\cC$ are identical. 

Assume that $q'_i \neq q_i, q_j$, $q'_j \neq q_i, q_j$. If $p \neq q_i, q_j$,
then the columns of indices $(q'_i,p)$ and $(q'_j ,p)$  (respectively $(p, q'_i)$ and $(p, q'_j)$)
remain identical in the adjacency matrix of $\cA$. If $p = q_i$,
the columns of indices $(q'_i,q_i)$ and $(q'_j ,q_i)$ 
(respectively $(q_i, q'_i)$ and $(q_i, q'_j)$) become the columns of indices 
$(q'_i,q_{ij})$ and $(q'_j ,q_{ij})$ (respectively $(q_{ij}, q'_i)$ and $(q_{ij}, q'_j)$) in $\cA$
and remain identical. Thus, one can merge the states  $q'_i, q'_j$ in $\cA$ leading to an automaton $\cD$.

Let $M$ be the adjacency matrix of $\cC$ and $N$ the adjacency matrix of $\cD$.
If $q \neq q_{ij}, q'_{ij}$, 
we have $N_{q, (r, t)} = M_{q, (r', t')}$ for each $r', t'$ such that
$\pi(r') = r, \pi(t') = t$, where $\pi(s) = s$ for each $s \neq q_i, q_j, q'_i, q'_j$, $\pi(q_i) = \pi(q_j) = q_{ij}$, and $\pi(q'_i) = \pi(q'_j) = q'_{ij}$.
We also have $N_{q_{ij}, (r, t)} = M_{q_i, (r', t')} + M_{q_j, (r', t')}$, 
$N_{q'_{ij}, (r, t)} = M_{q'_i, (r', t')} + M_{q'_j, (r', t')}$ for each $r', t'$ such that
$\pi(r') = r, \pi(t') = t$.

Similarly, one can merge the states $q_i, q_j$ in $\cB$ and get an automaton $\cD'$ with the same 
transition matrix as $\cD$.

Assume now that $q'_i =  q_i$ (and $q'_j \neq q_i, q_j$),  the columns of indices $(q_i,p)$,  $(q_j ,p)$  and $(q'_j,p)$ (respectively $(p, q_i)$,  $(p, q_j)$  and $(p, q'_j)$)
are equal. Thus, in $\cA$, the columns of indices $(q_{ij},p)$ and $(q'_j ,p)$ (respectively $(p, q_{ij})$ and $(p, q'_j)$) are equal.
Therefore, it is possible to merge the states $q_{ij}$ and $q'_j$ of $\cA$ into a state denoted $(q_{ij}, q'_j)$ and get an automaton $\cD$. 

If $q \neq (q_{ij}, q'_j)$, 
we have $N_{q, (r, t)} = M_{q, (r', t')}$ for all $r', t'$ such that
$\pi(r') = r, \pi(t') = t$, where $\pi(s) = s$ for each $s \neq q_i, q_j, q'_j$, 
$\pi(q_i) = \pi(q_j) = 
\pi(q'_j) = (q_{ij},q'_j)$. 
We also have $N_{(q_{ij}, q'_j), (r, t)} = M_{q_i, (r', t')} + M_{q_j, (r', t')} + M_{q'_j, (r', t')}$, 
for all $q', r'$ such that
$\pi(r') = r, \pi(t') = t$.

Similarly, it is possible to merge the states $q'_{ij}$ and $q_j$ of $\cB$ into a state denoted $(q'_{ij}, q_j)$
and obtain an automaton~$\cD'$ identical to $\cD$ after renaming $(q'_{ij}, q_j)$ by $(q_{ij}, q'_j)$.
All other cases run identically to the one above, or are trivial.
\end{proof}

As a consequence, given an edge tree automaton $\cC$, there is a unique edge tree automaton, up to a renaming of the states, 
obtained by performing amalgamations until we cannot perform any more.
This edge tree automaton is called the \emph{total amalgamation} of $\cC$. 

\begin{example}
Let $\cA'=(Q', \Delta')$ be the tree automaton with $Q' = \{q_1, q_2, q_3\}$ and transitions
\begin{align*}
&q_1  \xrightarrow{1}(q_1, q_1), &q_1  \xrightarrow{1}(q_1, q_2), &&q_1  \xrightarrow{1}(q_2, q_1), &&q_1  \xrightarrow{1}(q_2, q_2),\\
&q_2  \xrightarrow{1}(q_3, q_3), &&&\\
&q_3  \xrightarrow{1}(q_1, q_1), &q_3  \xrightarrow{1}(q_1, q_2), &&q_3  \xrightarrow{1}(q_2, q_1), &&q_3  \xrightarrow{1}(q_2, q_2).
\end{align*}
Recall that the integer $n$ above a transition $p  \xrightarrow{n}(q, r)$ means that there are $n$ transitions from $p$ to $(q, r)$, and that all transitions have distinct labels.

The transition matrix of $\cA'$ is 
\begin{align*}
M' &= 
\begin{blockarray}{cccccc}
 & (q_1, q_1) & (q_1, q_2) & (q_2, q_1) & (q_2, q_2) & (q_3, q_3) \\
\begin{block}{c(ccccc)}
  q_1 & 1 & 1 & 1 & 1 & 0 \\
  q_2 & 0 & 0 & 0 & 0 & 1 \\
  q_3 & 1 & 1 & 1 & 1 & 0 \\
\end{block}
\end{blockarray}.
\end{align*}
Its total amalgamation is obtained by merging states $q_1$ and $q_2$ into $q_{12}$ with transition matrix is
\begin{align*}
M &= 
\begin{blockarray}{ccc}
 & (q_{12}, q_{12}) & (q_3, q_3) \\
\begin{block}{c(cc)}
  q_{12} & 1 & 1  \\
  q_3    & 1 & 0  \\
\end{block}
\end{blockarray}.
\end{align*}
\end{example}
 
\begin{example}
Let $\cA'=(Q', \Delta')$ be the tree automaton with $Q' = \{q_1, q_2\}$ and transitions
\begin{align*}
&q_1  \xrightarrow{1}(q_1, q_1), &q_1  \xrightarrow{1}(q_1, q_2), &&q_1  \xrightarrow{1}(q_2, q_1), &&q_1  \xrightarrow{1}(q_2, q_2),\\
&q_2  \xrightarrow{1}(q_1, q_1), &q_2  \xrightarrow{1}(q_1, q_2), &&q_2  \xrightarrow{1}(q_2, q_1), &&q_2  \xrightarrow{1}(q_2, q_2).
\end{align*}
Its transition matrix is 
\begin{align*}
M' &= 
\begin{blockarray}{ccccc}
 & (q_1, q_1) & (q_1, q_2) & (q_2, q_1) & (q_2, q_2) \\
\begin{block}{c(cccc)}
  q_1 & 1 & 1 & 1 & 1 \\
  q_2 & 1 & 1 & 1 & 1 \\
\end{block}
\end{blockarray}.
\end{align*}
Its total amalgamation is obtained by merging states $q_1$ and $q_2$ into $q_{12}$ with transition matrix is
\begin{align*}
M &= 
\begin{blockarray}{cc}
 & (q_{12}, q_{12})\\
\begin{block}{c(c)}
  q_{12} & 2   \\
\end{block}
\end{blockarray}.
\end{align*}
\end{example}

The following result generalizes Proposition \ref{propositionWilliams2} to tree-shifts. It was established in \cite{AubrunBeal2012} using top-down tree automata and under hypotheses stronger than trimness.

\begin{proposition}  \label{proposition.tree.williams2}
Let $\cA$ and $\cB$ be trim edge tree automata that define edge tree-shifts $X$ and $Y$ respectively. Then $X$
and $Y$ are conjugate if and only if $\cA$ and $\cB$ have the same total
amalgamation.
\end{proposition}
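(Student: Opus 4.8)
The plan is to prove the equivalence in two directions, mirroring the structure of Williams's original result for one-sided sequence shifts (Proposition~\ref{propositionWilliams2}) but now carried out with trim top-down edge tree automata. The easy direction is the ``if'': suppose $\cA$ and $\cB$ have a common total amalgamation $\cC$. By definition, $\cA$ reaches $\cC$ through a sequence of out-mergings, and likewise for $\cB$. Each elementary out-merging is the inverse of an out-splitting, so by Proposition~\ref{proposition.tree.splitting} each such step yields a conjugacy between the corresponding edge tree-shifts. Composing these conjugacies, $X = X_\cA$ is conjugate to $X_\cC$, and $X_\cC$ is conjugate to $Y = X_\cB$; since conjugacy is an equivalence relation, $X$ and $Y$ are conjugate.

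The substantive direction is the ``only if'': assume $\varphi \colon X \to Y$ is a conjugacy, and we must show $\cA$ and $\cB$ have the same total amalgamation. Following Williams's strategy, I would first reduce an arbitrary conjugacy to a \emph{standard form}. The classical fact (the one-sided analogue of the decomposition theorem) is that any conjugacy between edge shifts factors as a composition of out-splittings and out-mergings; the tree-shift version should be obtained by the same bookkeeping, passing through higher block presentations $\gamma_k$ (which are isomorphisms, as recorded in the excerpt) to absorb the anticipation of $\varphi$ and of $\varphi^{-1}$, and then realizing the resulting block-isomorphism as a chain of elementary state-splittings and mergings at the level of transition matrices. Once $\varphi$ is presented as a zig-zag of out-splittings and out-mergings connecting $\cA$ to $\cB$, the conclusion follows from the confluence property: Proposition~\ref{proposition.tree.williams1} guarantees that any two amalgamations of a common automaton have a common amalgamation, which is exactly what makes the total amalgamation well defined and unique up to renaming of states. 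Because $\cA$ and $\cB$ are joined by splittings and mergings, they lie in the same equivalence class of the relation generated by elementary amalgamation, and hence they amalgamate to the same terminal object, i.e.\ the same total amalgamation.

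The main obstacle I anticipate is establishing the decomposition of an arbitrary tree-shift conjugacy into elementary out-splittings and out-mergings, since this is where the one-sided structure of trees must be exploited carefully. For sequence shifts this is Williams's State Splitting Theorem; here one must check that the top-down reading direction of the tree automaton plays the same role that the forward direction plays for one-sided sequences, and in particular that the trimness hypothesis (every state has at least one outgoing transition) is enough to push through the argument without the irreducibility assumption used in \cite{AubrunBeal2012}. The key technical point is that a conjugacy and its inverse are both sliding block codes of some finite height anticipation (stated in the tree-shift setting earlier in the excerpt), so after passing to a sufficiently high block presentation the conjugacy becomes a $1$-block map in each direction; a $1$-block conjugacy between edge tree automata is then recognized, via its effect on the transition matrix, as a sequence of elementary state splittings and mergings. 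Verifying that the column-equality conditions (\ref{condition.merge.1}) and (\ref{condition.merge.2}) for a legitimate tree out-merging are met at each stage is the routine-but-delicate computation I would defer, relying on Proposition~\ref{proposition.tree.splitting} to supply the conjugacy for each atomic step and on Proposition~\ref{proposition.tree.williams1} for confluence.

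Once both directions are in place, the final assembly is immediate: conjugacy of $X$ and $Y$ is equivalent to $\cA$ and $\cB$ being connected by out-splittings and out-mergings, which by confluence is equivalent to their having a common total amalgamation, and since the total amalgamation is unique, this says precisely that $\cA$ and $\cB$ have the \emph{same} total amalgamation. As a corollary one recovers the remark made after Proposition~\ref{propositionWilliams2} in the excerpt, namely that Williams's result extends from irreducible directed graphs to trim ones (the case $d = 1$), since the whole argument specializes to one-sided sequence shifts.
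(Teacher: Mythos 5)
Your ``if'' direction matches the paper's. For the ``only if'' direction, however, your route contains a genuine gap at exactly the point you defer. You propose to put the conjugacy $\varphi$ into a standard form by passing to higher block presentations so that ``the conjugacy becomes a $1$-block map in each direction,'' and then to read off a chain of elementary splittings and mergings from the transition matrices. But recoding $X$ by $\gamma_n$ only makes $\varphi$ itself $1$-block; it does not reduce the anticipation of $\varphi^{-1}$ (it increases it). A conjugacy between edge tree-shifts that is $1$-block with a $1$-block inverse is forced to be a bijective relabeling of transitions, i.e.\ an isomorphism of the defining automata; if such a form were always achievable by recoding, any two conjugate edge tree-shifts would have isomorphic higher block automata, which is false already for $d=1$ (compare the two graphs of Figure~\ref{figureSplitting}, which define conjugate but non-isomorphic edge shifts). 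So the decomposition of an arbitrary conjugacy into elementary out-splittings and out-mergings --- the one-sided/tree analogue of the Decomposition Theorem --- is precisely the substantive content of the proposition, and your proposal leaves it unproved, flagged as a ``routine-but-delicate computation I would defer.''

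The paper sidesteps the decomposition theorem entirely. It builds a single auxiliary edge tree automaton $\cC$ on the state set $\cB_n(X) \times \cB_{n'}(Y)$, where $n-1$ and $n'-1$ are the anticipations of $\varphi$ and $\varphi^{-1}$; its transitions encode simultaneously the overlap structure of blocks of $X$, that of the corresponding blocks of $Y=\varphi(X)$, and the consistency condition tying the second coordinate to $f(b)$. It then exhibits explicit sequences of general amalgamations: truncating the second coordinate one level at a time (the automata $\cC(n,k)$), one lands on an out-splitting of $\cA$; exchanging the roles of the two coordinates and using the block map of $\varphi^{-1}$, one lands on an out-splitting of $\cB$. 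Since $\cA$ and $\cB$ are thus both amalgamations of the common automaton $\cC$, Proposition~\ref{proposition.tree.williams1} (confluence) yields a common amalgamation and hence the same total amalgamation. To repair your argument, replace the ``$1$-block in both directions'' step by this construction of a common refinement; that is where the column-equality conditions (\ref{condition.merge.1}) and (\ref{condition.merge.2}) are actually verified, and where trimness (rather than irreducibility) is used to keep the intermediate automata legitimate.
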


\begin{proof}
If $\cA$ and $\cB$ have the same total amalgamation
then they are both out-splittings of that amalgamation, 
hence by Proposition \ref{proposition.tree.splitting}, they are conjugate.

Conversely, let 
$\cA=(Q, \Delta)$ and $\cB=(Q', \Delta')$ be two trim edge tree automata that define edge tree-shifts $X$ and $Y$ respectively.
Let us assume that there is conjugacy $\varphi \colon X \to Y$ with height anticipation $n{-}1$ defined by the block map $f\colon \cB_{n}(X)\to B$, where $B$ is the alphabet of $Y$.
Assume that $\varphi^{-1}$ is the sliding block code with height anticipation $n'{-}1$ defined by the block map $f'\colon \cB_{n'}(Y)\to A$, where $A$ is the alphabet 
of $X$. Without loss of generality, we may suppose that $n, n' \geq 2$. 

A block $b$ of height $n$ can be written $b = (a, \sigma_1(b), \sigma_0(b))$, where $\sigma_i(b)$ are blocks of height $n{-}1$, and $a = b(\varepsilon)$. 
 
We define a tree automaton $\cC = (\cB_n(X) \times \cB_{n'}(Y), \Gamma)$ labeled in $A \times B$ as follows. 
There is a transition 
\[
(b, b') \xrightarrow{(a, f(b))} ((c, c'), (d, d'))
\]
in $\cC$ if and only if
\begin{itemize}
\item $b = (a, \sigma_1(b), \sigma_0(b))$ with $c^{(n-1)} =  \sigma_1(b)$ and $d^{(n-1)} =  \sigma_0(b)$.
\item $b' = (f(b), \sigma_1(b'), \sigma_0(b'))$ with $c'^{(n-1)} =  \sigma_1(b')$ and $d'^{(n-1)} =  \sigma_0(b')$.
\end{itemize}

By construction, $\cC$ is trim.
Let $\textbf{t}$ be a tree labeled on $A \times B$ accepted by~$\cC$. 
Let $t$ (respectively $t'$) be the tree obtained from $t$ by taking the first (respectively second) component of the labels of the nodes of $\textbf{t}$.
The tree $t$ is called the \emph{input (respectively output) tree} of $\textbf{t}$.
By construction, $t' = \varphi(t)$. 
Conversely, if  
$t$ is a tree of $X$ and $t' = \varphi(t)$, 
there is a unique computation tree $\textbf{u}$ of $\cC$ accepting $\textbf{t}$. 
This computation tree starts at the state $(b , b')$, where $b = t^{(n)}$ and $b' = t'^{(n')}$.

Let us show that the automaton $\cA$ can be obtained from $\cC$ through a sequence of amalgamations.

We can perform a general amalgamation of $\cC$ by merging all states $(b, b')$, $(b , c')$
sharing the same block $b$ and such that $b'^{(n'-1)} = c'^{(n'-1)}$. 

Indeed, let $(b , b')$, $(b , c')$ be two states of $\cC$ such that $b'^{(n'-1)} = c'^{(n'-1)}$.
Let 
\[(e, e') \xrightarrow{(e(\varepsilon), f(e))} ((b , b'), (g , g'))
\]
be a left incoming transition of $(b, b')$
in $\cC$. By construction, there is also a transition in $\cC$
\[ (e, e') \xrightarrow{e(\varepsilon), f(e))} ((b, c'), (g, g')).
\]
The same property holds for right incoming transitions of $(b, b')$.
We assume that all states $(b, b')$ sharing the same blocks $b$ and $b'^{(n'-1)}$
have been merged into a state denoted by $(b, b'^{(n'-1)})$ in an automaton denoted  $\cC(n, n'-1)$.
After this amalgamation, there is a transition in $\cC(n, n'-1)$
\[ (e, e'^{(n'-1)})\xrightarrow{(e(\varepsilon), f(e))} ((b, b'^{(n'-1)}), (g, g'^{(n'-1)})), 
\]
if and only if
\begin{itemize}
\item $e = (e(\varepsilon), \sigma_1(e), \sigma_0(e))$ with $b^{(n-1)} =  \sigma_1(e)$ and $g^{(n-1)} =  \sigma_0(e)$.
\item $e'^{(n'-1)} = (f(e), \sigma_1(e'^{(n'-1)}), \sigma_0(e'^{(n'-1)}))$ with $b'^{(n'-2)} =  \sigma_1(e'^{(n'-1)})$ and $g'^{(n'-2)} =  \sigma_0(e'^{(n'-1)})$. 

\end{itemize}
Hence,  if $b'^{(n'-2)} = c'^{(n'-2)}$, then 
$(b, b'^{(n'-1)})$ has a left (respectively right) incoming transition from $(e, e'^{(n'-1)})$ if and only if $(b,  c'^{(n'-1)})$ has a left (respectively right) incoming transition from $(e, e'^{(n'-1)})$.

For each $1 \leq k \leq n'$, we perform a general amalgamation of $\cC(n, k)$ into $\cC(n, k-1)$
by merging all states 
$(b, b'^{(k)} )$, $(b , c'^{(k)})$
sharing the same block $b$ and such that $b'^{(k-1)} = c'^{(k-1)}$.
 Indeed, all states $(b, b'^{(k)})$, $(b, c'^{(k)})$ with 
$b'^{(k-1)} = c'^{(k-1)}$ have the same left (respectively right) incoming transitions.
The states of $\cC(n, 0)$ are pairs $(b, \varepsilon)$, where $\varepsilon$ is the empty block.
We define the automaton $\cD(n)$ obtained from $\cC(n, 0)$ by keeping the first components of the states and of the edges' labels.

Now, for each $3 \leq k \leq n$, we perform a general amalgamation of $\cD(k)$ into $\cD(k-1)$ by merging all states 
$b^{(k)} $, $ c^{(k)} $ such that $b^{(k-1)} = c^{(k-1)}$. 
Indeed, states $ b^{(k)}$, $ c^{(k)} $ such that $b^{(k-1)} = c^{(k-1)}$
have the same left (respectively right) incoming transitions in $\cD(k)$.

Finally, the automaton $\cD(2)$ is an out-splitting of $\cA$. Indeed, we perform an out-splitting $\cD$ of $\cA$ by partitioning the transitions going out
of a state $p$ into singletons. 
If $p$ is a state of $\cA$, it is split into states $(p, a)$, for each label $a$ of a transition going out of $p$. 
Since $\cA$ is trim,
if $(p,a) \to (q, r)$ is a transition of $\cA$,
then $((p, a), a) \to ((q, b), (r, c))$ is a transition of $\cD$ for all letters $b, c$ such that $(a, b, c) \in \cB_2(X)$.
Up to a renaming of the states, we have $\cD = \cD(2)$. 

We now show that the automaton $\cB$ can be obtained from $\cC$ through a sequence of amalgamations.
We just exchange the roles played by $\cA$ and $\cB$, that is,  the roles played by the first and the second components of the states
and of the labels of the transitions. Indeed, if 
\[ (e, e' ) \xrightarrow{(e(\varepsilon), f(e))} ((b, c'), (g, g')).
\]
is a transition of $\cC$, then $e(\varepsilon) = g(e')$, where $g$ is the block map defining $\varphi^{-1}$, and $f(e) = e'(\varepsilon)$.

Hence, there is a sequence of amalgamations from $\cC$ to $\cA$, and there is a sequence of amalgamations from $\cC$ to $\cB$.
By Proposition \ref{proposition.tree.williams1}, $\cA$ and $\cB$ have the same total amalgamation.
\end{proof}
Note that the result holds even if the edge tree-shifts are not irreducible (see \cite{AubrunBeal2013} for a definition).
We only assume they are defined by trim edge automata.

\section{Hom tree-shifts} \label{section.hom.tree.shifts}
Let $G=(V, E)$ be a simple (unlabeled) undirected graph (where self-loops are allowed). The \emph{Hom tree shift} defined by $G$, denoted $X(G)$, is the set of trees $t$ labeled on $V$ such that each path of $t$ is a path of $G$. In other words, each sequence $(y_n)_{n \geq 0} = (t_{x_n})_{n \geq 0}$ such that $x_{n+1} \in x_{n}\Sigma^*$, is a sequence of vertices of $G$ such that $(y_n, y_{n+1})$ is an edge of $G$.

A Hom tree-shift defined by $G =(V, E)$ is a tree-shift of finite type. Indeed, it is equal to $\XS_F$, where $F$ is the set of blocks $b$ of height $2$ on $V$ such that $(b(\varepsilon), b(i))$ is not an edge of $G$ for at least one $i \in \Sigma$.

It is accepted by the tree automaton $\cB = (V, \Delta)$
whose transitions are $(p, p) \to (q, r)$, where $(p, q), (p, r) \in E$. Note that the label $p$ of this transition is equal to its starting state.

Note that if $\cB$ has the transition $(p, p) \to (q, r)$, or if it has the transitions
$(p, p) \to (q, q)$ and $(p, p) \to (r, r)$,
then it must have all transitions:
\begin{itemize}
    \item $(p, p) \to (q, r)$
    \item $(p, p) \to (q, q)$
    \item $(p, p) \to (r, r)$
    \item $(p, p) \to (r, q)$
\end{itemize}
Moreover, since $G$ is an undirected graph, if $\cB$ has the transition $(p, p) \to (q, r)$, then it has transitions $(q, q) \to (p, p)$ and $(r, r) \to (p, p)$.

Let $\cA(G)$ be the edge tree automaton obtained from $\cB$ by labeling all transitions with distinct labels. 
Note that, if $(p, q, r) \in \cB_2(X(G))$, then $\cA(G)$ has a unique transition from $p$ to $(q, r)$.
\begin{lemma}
Let $G$ be a simple undirected graph.
The tree-shift $X(G)$ is conjugate to the edge tree-shift accepted by $\cA(G)$.    
\end{lemma}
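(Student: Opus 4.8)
The plan is to imitate the proof of Lemma~\ref{lemma.homshifts}, exhibiting the conjugacy and its inverse explicitly as sliding block codes. Since $\cA(G)$ is an edge tree automaton, every transition carries a distinct label, and by the remark preceding the statement, for each block $(p,q,r)\in\cB_2(X(G))$ there is exactly one transition of $\cA(G)$ from $p$ to $(q,r)$; I write $\ell(p,q,r)$ for its label, let $L$ denote the alphabet of all such labels, and record that the source state of $\ell(p,q,r)$ is $p$. Note that $X(G)$ is a tree-shift on $V$ while the edge tree-shift $X_{\cA(G)}$ is a tree-shift on $L$, and that the states of $\cA(G)$ are exactly the vertices $V$.

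First I would define the sliding block code $\varphi\colon X_{\cA(G)}\to T(V)$ with height anticipation $0$, given by the $1$-block map $f\colon L\to V$ sending each label to its source state, so that $\varphi(t')_x=f(t'_x)$. Its candidate inverse is the sliding block code $\psi\colon X(G)\to T(L)$ with height anticipation $1$, given by the $2$-block map $g\colon\cB_2(X(G))\to L$ with $g(p,q,r)=\ell(p,q,r)$, so that $\psi(t)_x=\ell(t_x,t_{x1},t_{x0})$. Both are morphisms, being sliding block codes, so it remains to pin down their ranges and to check they are mutually inverse.

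Next I would verify the ranges. For $\psi$, given $t\in X(G)$ the tree $u$ with $u_x=t_x$ is a computation tree of $\psi(t)$ in $\cA(G)$: at each node $x$ the triple $(t_x,t_{x1},t_{x0})$ is a block of $X(G)$, hence $(u_x,\psi(t)_x)\to(u_{x1},u_{x0})$ is exactly the transition labeled $\ell(t_x,t_{x1},t_{x0})$, so $\psi(t)\in X_{\cA(G)}$. For $\varphi$, given $t'\in X_{\cA(G)}$ with computation tree $u$, the distinctness of labels forces the transition read at node $x$ to be the unique one labeled $t'_x$, so $u_x$ is the source state of $t'_x$, that is $u_x=f(t'_x)=\varphi(t')_x$; moreover that transition arises from one of $\cB$, namely $(u_x,u_x)\to(u_{x1},u_{x0})$ with $(u_x,u_{x1}),(u_x,u_{x0})\in E$, so every consecutive pair along every path of $\varphi(t')$ is an edge of $G$, giving $\varphi(t')\in X(G)$.

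Finally I would check $\varphi$ and $\psi$ are inverse to each other. The identity $\varphi(\psi(t))=t$ is immediate, since the source state of $\ell(t_x,t_{x1},t_{x0})$ is $t_x$. For $\psi(\varphi(t'))=t'$, the computation in the previous paragraph shows that $\varphi(t')_x$ equals the source state $u_x$ of $t'_x$ and that its children are $u_{x1},u_{x0}$, so $\ell(u_x,u_{x1},u_{x0})$ is precisely the unique transition labeled $t'_x$, whence $\psi(\varphi(t'))_x=t'_x$. Thus $\varphi$ is a conjugacy with inverse $\psi$. The only step requiring real care is this reconstruction for $\varphi$: it is exactly the distinctness of the labels of $\cA(G)$ that makes the computation tree unique and lets a single letter $t'_x$ determine its source state, so that the inverse genuinely has height anticipation $0$ and no global information about the tree is needed.
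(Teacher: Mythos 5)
Your proof is correct and follows essentially the same route as the paper: the conjugacy is the $1$-block map sending each label of $\cA(G)$ to its source state, with inverse given by the $2$-block map sending $(p,q,r)$ to the label of the unique transition from $p$ to $(q,r)$. The paper states these two sliding block codes without further verification, whereas you additionally check the ranges and the mutual-inverse identities, which is a welcome but not essentially different elaboration.
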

\begin{proof}
Let $G = (V, E)$ be a simple undirected graph and $\cA(G)= (V, \Delta)$ be the above defined edge tree automaton. 
Let $A$ be the alphabet of $\cA(G)$.
Let $\varphi \colon X_{\cA(G)} \to X(G)$ be the sliding block code defined by the $1$-block map $f \colon A \to V$ with $f(a) = p$ 
if $p$ is the starting state of the unique transition labeled by $a$.
The map $\varphi$
 is a conjugacy from $X_{\cA(G)}$ to $X(G)$. Its inverse is the sliding block code defined by the $2$-block map $g \colon \cB_2(X(G)) \to A$ with 
 $g((p, q, r)) = a$, where $a$ is the label of the unique transition from $p$ to $(q, r)$.
\end{proof}

Note that we will usually omit the node labels from our notation whenever the labels are not directly used and when transitions have distinct labels.

We say that an edge tree automaton $\cA=(Q, \Delta)$ with adjacency matrix $M$ is \emph{regular} if each state $p$:
\begin{itemize}
\item if $M_{p, (q, r)} > 0$ and  $M_{p, (s, t)} > 0$, then $M_{p, (u, v)} > 0$ for all $u, v \in \{q, r, s, t\}$;
\item the non-null coefficients of the row of index $p$ of $M$ are equal;
\item if $M_{p, (q,r)} > 0$, then $M_{q, (p, p)} > 0$ and $M_{r, (p, p)}>0$.
\end{itemize}
These first two conditions are equivalent to the following: for each state $p$, there is a set of states $S(p)$ 
and a positive integer $n_p$ such that, for each states $q, r$, either $M_{p, (q, r)} = 0$ or  $M_{p, (q, r)} = n_p$,
and $M_{p, (q, r)} = n_p$ if and only if $q, r \in S(p)$.
Note that an edge tree automaton $\cA(G)$ is regular, but the edge tree automaton obtained by an out-splitting of $\cA(G)$ may not remain regular.

\begin{proposition} \label{proposition.tree.regular}
The amalgamation of a regular edge tree automaton is a regular edge tree automaton.
\end{proposition}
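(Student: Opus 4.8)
The plan is to mirror the proof of Proposition~\ref{propositionRegular1}, but using the structured ``box'' description of regularity. Write $M$ for the transition matrix of the regular automaton $\cA$. Regularity supplies, for each state $p$, a set of states $S(p)$ and an integer $n_p\geq 1$ (the automata being trim, so $S(p)\neq\emptyset$) such that $M_{p,(q,r)}=n_p$ when $q,r\in S(p)$ and $M_{p,(q,r)}=0$ otherwise. As in the sequence case it suffices to treat an elementary amalgamation, the merging of two states $s_1,s_2$ into a single state $s_{12}$, since a general amalgamation is a composition of such steps; I would invoke this reduction explicitly. Let $\pi$ fix every state other than $s_1,s_2$ and send both of these to $s_{12}$, extended to pairs by $\pi(q,r)=(\pi(q),\pi(r))$. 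Because rows are summed and duplicated columns deleted, the transition matrix $N$ of the amalgamation satisfies $N_{\pi(p),(\pi(q),\pi(r))}=\sum_{p'\in\pi^{-1}(\pi(p))}M_{p',(q,r)}$, and this is independent of the chosen representative $(q,r)$ because the merging prerequisite makes all columns in a single $\pi$-class identical (the four pairs over $\{s_1,s_2\}$ collapsing to $(s_{12},s_{12})$).

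Two preliminary facts drive the argument. First, the third regularity condition makes the relation ``$r\in S(p)$'' symmetric: putting $q=r$ in ``$M_{p,(q,r)}>0\Rightarrow M_{r,(p,p)}>0$'' gives $r\in S(p)\Rightarrow p\in S(r)$, and exchanging the roles of $p$ and $r$ gives the converse. Second, and this is the crux, the merging prerequisite forces $S(s_1)=S(s_2)$. Indeed the prerequisite says $M_{q,(s_1,p)}=M_{q,(s_2,p)}$ for all $q,p$; fixing $q$, choosing $p\in S(q)$ (possible by trimness) and using $n_q>0$ yields $s_1\in S(q)\iff s_2\in S(q)$ for every $q$. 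Combined with the symmetry just established, $q\in S(s_1)\iff s_1\in S(q)\iff s_2\in S(q)\iff q\in S(s_2)$, so $S(s_1)=S(s_2)$; call this common set $S$ and write $\bar S=\pi(S)$.

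With $S(s_1)=S(s_2)=S$ in hand, the three regularity conditions for $N$ fall out. For an unmerged state $P$ we have $\pi^{-1}(\pi(P))=\{P\}$, so deleting duplicate columns merely keeps a subset of the entries of the $P$-row of $M$, all of which are $n_P$ or $0$; hence the nonzero entries of the $P$-row of $N$ are still all equal and occur exactly on $\bar S(P)\times\bar S(P)$ with $\bar S(P)=\pi(S(P))$, so the first two conditions are preserved, and the third transports along $\pi$ using the symmetry fact. For the merged state, $N_{s_{12},(\pi(u),\pi(v))}=M_{s_1,(u,v)}+M_{s_2,(u,v)}$; since $u,v\in S(s_1)\iff u,v\in S(s_2)$, the two summands vanish together or equal $n_{s_1},n_{s_2}$ together. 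Thus the nonzero entries of the $s_{12}$-row are all $n_{s_1}+n_{s_2}$ and occur exactly when $\pi(u),\pi(v)\in\bar S$, giving the first two conditions with $S(s_{12})=\bar S$ and $n_{s_{12}}=n_{s_1}+n_{s_2}$; the third condition for $s_{12}$ again follows by pushing the corresponding property of $M$ through $\pi$.

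The main obstacle I expect is precisely the crux identity $S(s_1)=S(s_2)$. Regularity controls each row as a box $S(p)\times S(p)$, but to keep the first (box) condition after merging one must know that the two boxes being summed actually coincide, and this coincidence is not visible from the column prerequisite alone: the prerequisite only makes $s_1,s_2$ interchangeable \emph{as targets}, and it is the symmetry of ``$r\in S(p)$'' coming from the third regularity condition that converts this into interchangeability \emph{as sources}. The remaining work, chiefly the bookkeeping of which columns survive deletion and that representatives may be chosen freely within a $\pi$-class, is routine once the prerequisite is used to identify those columns.
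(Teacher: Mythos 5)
Your proof is correct and takes essentially the same route as the paper's: reduce to an elementary merging, use the third (symmetry) regularity condition together with the column-identity prerequisite to show that the two merged states have the same box, $S(s_1)=S(s_2)$, and then verify the three conditions row by row with $S(\pi(p))=\pi(S(p))$ and $n_{s_{12}}=n_{s_1}+n_{s_2}$. The only difference is presentational: you isolate the crux $S(s_1)=S(s_2)$ as an explicit preliminary lemma, whereas the paper derives the same fact inline through a chain of implications $M_{p',(q',r')}>0\Rightarrow M_{q',(p',p')}>0\Rightarrow M_{q',(p,p)}>0\Rightarrow M_{p,(q',q')}>0$.
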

\begin{proof}
Assume that two states $q_i, q_j$ of a regular edge tree automaton $\cA$ are merged into a state $q_{ij}$ in an edge tree automaton $\cB$. Let $M, N$ be the adjacency matrices of $\cA$ and $\cB$, respectively.

Thus, the columns of indices $(q_i, p)$ and $(q_j, p)$ (respectively $(p, q_i)$ and $(p, q_j)$) of the adjacency matrix $M$ of $\cA$ are identical for each state $p$. 
For a state $p$ of $\cA$, let $\pi(p)$ denote the state $p$ if $p \neq q_i, q_j$, where $\pi(q_i) = \pi(q_j) = q_{ij}$.

Let us show that $\cB$ satisfies the last condition of regularity. 
If $N_{\pi(p), (\pi(q), \pi(r))}$ $> 0$, 
then there exist $p'\in V$ with $\pi(p') = \pi(p)$ such that for all $q', r' \in V$ 
such that $\pi(q') = \pi(q)$, $\pi(r') = \pi(r)$, we have $M_{p', (q', r')} = n_{p'} > 0$.
This implies that $M_{q', (p', p')} > 0$ and $M_{r', (p', p')} > 0$, implying
$N_{\pi(q), (\pi(p), \pi(p))} > 0$ and $N_{\pi(r), (\pi(p), \pi(p))} > 0$.

We now show that the first two conditions of regularity are verified.
For each state $p$, there is a set of states $S(p)$ and a positive integer $n_p$ such that, for each states $q, r$, either $M_{p, (q, r)} = 0$ or  $M_{p, (q, r)} = n_p$,
and $M_{p, (q, r)} = n_p$ if and only if $q, r \in S(p)$.
For each state $\pi(p)$ of $\cB$, we set $S(\pi(p)) = \cup_{q \in V \mid \pi(q) = \pi(p)} \pi(q) =  \pi(S(p))$, and $n_{\pi(p)} = \sum_{q \in V \mid \pi(q) = \pi(p)} n_q$.

Let us show that either $N_{\pi(p), (\pi(q), \pi(r))} = 0$ or $N_{\pi(p), (\pi(q), \pi(r))} = n_{\pi(p)}$,
and $N_{\pi(p), (\pi(q), \pi(r))}$ $= n_{\pi(p)}$ if and only if $\pi(q), \pi(r) \in S(\pi(p))$.

By construction, if $N_{\pi(p), (\pi(q), \pi(r))} > 0$, then there exists $p' \in V$ with $\pi(p') = \pi(p)$,
such that for all $q', r' \in V$ with $\pi(q') = \pi(q)$, $\pi(r') = \pi(r)$, we have $M_{p', (q', r')} = n_{p'} > 0$. 

If $\pi(p) \neq q_{ij}$, then $p' = p$ and $N_{\pi(p), (\pi(q), \pi(r))} = n_p = n_{\pi(p)}$.

If $\pi(p) =  q_{ij}$, then $p = q_i$ or $p = q_j$, and $p' = q_j$ or $p' = q_i$.

Thus, $M_{q', (p', p)} = M_{q', (p', p')} > 0$,
 $M_{r', (p', p)} > 0$, 
$M_{q', (p, p)} > 0$ and $M_{r', (p, p)} > 0$, implying
$M_{p, (q', q')} > 0$ and $M_{p, (r', r')} > 0$.
We get $q', r' \in S(p)$. 
Hence, $N_{\pi(p), (\pi(q), \pi(r))} 
= N_{\pi(p), (\pi(q'), \pi(r'))} = n_{q_i} + n_{q_j}  = n_{\pi(p)}$.

Finally, if $N_{\pi(p), (\pi(q), \pi(r))} > 0$,
there exists $p'\in V$ with $\pi(p') = \pi(p)$, such that for all $q', r' \in V$ with $\pi(q') = \pi(q)$, $\pi(r') = \pi(r)$, 
we have $M_{p', (q', r')} = n_{p'} > 0$. 
If $N_{\pi(p), (\pi(s), \pi(t))} > 0$, 
there exist $p''\in V$ with $\pi(p'') = \pi(p)$, such that for all $s', t' \in V$ with $\pi(s') = \pi(s)$, $\pi(t') = \pi(t)$, 
we have $M_{p'', (s', t')} = n_{p''} > 0$. 
This implies that $M_{s', (p'', p'')} > 0$ and  $M_{t', (p'', p'')} > 0$,
$M_{s', (p'', p)} > 0$ and  $M_{t', (p'', p)} > 0$,
and thus $M_{p, (s', s')} > 0$ and  $M_{p, (t', t')} > 0$.
Thus, $s', t' \in S(p)$. As a consequence, 
$N_{\pi(p), (u', v')} > 0$ for each $u',v' \in \{\pi(q), \pi(r), \pi(s), \pi(t)\}$.
Hence, $\cB$ is regular.

\end{proof}

\begin{proposition} \label{proposition.tree.hom}
A tree-shift accepted by a regular edge tree automaton is conjugate to a Hom tree-shift.
\end{proposition}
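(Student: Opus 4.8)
The plan is to mirror the proof of Proposition~\ref{propositionRegular2} for one-sided edge shifts, replacing the adjacency matrix of a graph by the transition matrix of an edge tree automaton. Writing $\cA=(Q,\Delta)$ with transition matrix $M$ (which we may assume trim, so that the data below are well defined), I will use the reformulation of regularity recorded above: for each state $p$ there are a set $S(p)\subseteq Q$ and an integer $n_p\geq 1$ with $M_{p,(q,r)}=n_p$ when $q,r\in S(p)$ and $M_{p,(q,r)}=0$ otherwise, and the third regularity condition guarantees that $q\in S(p)$ implies $p\in S(q)$. The goal is to split each state into $n_p$ copies so as to obtain an edge tree automaton $\cA'$ with a $0$-$1$ transition matrix, to recognize $\cA$ as a general amalgamation of $\cA'$, and finally to identify $\cA'$ with the automaton $\cA(H)$ of a suitable undirected graph $H$.

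First I would construct $\cA'=(Q',\Delta')$ by replacing each state $p$ with $n_p$ copies $p_1,\dots,p_{n_p}$, so that $Q'=\{\,p_i : p\in Q,\ 1\le i\le n_p\,\}$, and by putting exactly one transition from $p_i$ to $(q_j,r_k)$ whenever $q,r\in S(p)$ (for all admissible $i,j,k$), with all transitions given distinct labels. By construction the transition matrix $M'$ of $\cA'$ has $0$-$1$ entries, and the entry $M'_{s_\ell,(p_i,q_j)}$ equals $1$ exactly when $p,q\in S(s)$, independently of the copy indices $\ell,i,j$. I then check that merging, for each $p$, the copies $p_1,\dots,p_{n_p}$ into a single state yields $\cA$: the merging is legitimate because, for copies $p_i,p_{i'}$ of the same $p$, the columns $(p_i,q_j)$ and $(p_{i'},q_j)$ of $M'$ coincide and likewise the columns $(q_j,p_i)$ and $(q_j,p_{i'})$ coincide, so conditions (\ref{condition.merge.1}) and (\ref{condition.merge.2}) of a general merging hold. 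Summing the $n_p$ rows of the copies of $p$ and collapsing the redundant columns returns the row with $M_{p,(q,r)}=n_p$ when $q,r\in S(p)$ and $0$ otherwise, i.e.\ exactly the row of $M$. Since $\cA'$ is thus a general out-splitting of $\cA$ (a composition of elementary out-splittings), Proposition~\ref{proposition.tree.splitting} gives a conjugacy between $X_{\cA}$ and $X_{\cA'}$.

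Finally I would realize $\cA'$ as a Hom automaton. Let $H$ be the undirected graph with vertex set $Q'$ in which $(p_i,q_j)$ is an edge if and only if $q\in S(p)$ (self-loops allowed, taken when $q=p$). This relation is symmetric precisely because of the third regularity condition, which gives $q\in S(p)\Leftrightarrow p\in S(q)$; hence $H$ is a well-defined simple undirected graph. Its associated edge tree automaton $\cA(H)$ has a transition from $p_i$ to $(q_j,r_k)$ exactly when $(p_i,q_j)$ and $(p_i,r_k)$ are edges of $H$, that is, exactly when $q,r\in S(p)$; so $\cA(H)=\cA'$ up to relabeling of transitions. By the lemma identifying $X(H)$ with the edge tree-shift accepted by $\cA(H)$, the shift $X_{\cA'}$ is conjugate to the Hom tree-shift $X(H)$, and composing the two conjugacies shows that $X_{\cA}$ is conjugate to $X(H)$.

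I expect the only genuinely delicate point to be the symmetry of the edge relation of $H$: it is the single place that uses the full strength of the hypothesis, through the third regularity condition. Without it the construction would only produce a directed graph, and the Hom tree-shift structure would fail—exactly the phenomenon flagged by the remark after Proposition~\ref{proposition.tree.regular} that out-splitting can destroy regularity. The remaining verifications, namely the column bookkeeping for the pair-indexed matrix $M'$ and the identity $\cA(H)=\cA'$, are routine.
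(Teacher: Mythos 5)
Your proof is correct and follows essentially the same route as the paper's: split each state $p$ into $n_p$ copies to obtain an edge tree automaton with a $0$-$1$ transition matrix, recognize the original automaton as a general amalgamation of it, and identify the split automaton with $\cA(H)$ for an undirected graph $H$ whose edge relation is symmetric thanks to the third regularity condition. Your version is if anything slightly more explicit than the paper's in verifying the column conditions for the merging and the symmetry of $H$.
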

\begin{proof}
Let $X$ be the tree-shift accepted by a regular edge tree automaton $\cB= (Q, \Delta)$ with an adjacency matrix $M$. All non-null values 
of the row $p$ of $M$ are equal to $n_p$. 
We split each state $p$ of $\cB$ into $n_p$ states $p_1, \ldots p_{n_p}$ in a new edge tree automaton
$\cC$. If there is a transition $p \xrightarrow{n_p} (q, r)$ in $\Delta$, there are transitions 
$p_i \to (q_j, r_k)$ for all $1 \leq i \leq n_p$, $1 \leq j \leq n_q$, and $1 \leq k \leq n_r$. 

The automaton $\cC$ is thus an edge tree automaton (where all transitions have distinct labels) that has exactly one transition from 
a state $p_i$ to the pair $(q_j, r_k)$ for each $1 \leq i \leq n_p$, $1 \leq j \leq n_q$, and $1 \leq k \leq n_r$. Let $N$ be the adjacency matrix of~$\cC$.
Note that if $N_{p_i, (q_j, r_k)} = 1$ for some $1 \leq i \leq n_p$, $1 \leq j \leq n_q$, and $1 \leq k \leq n_r$, then  $N_{p_{i'}, (q_{j'}, r_{k'})} = 1$ for all $1 \leq i' \leq n_p$, $1 \leq j' \leq n_q$ and $1 \leq k' \leq n_r$. Note also that if $N_{p_i,(q_j, r_k)} = 1$, then $N_{p_i, (r_k, q_j)} = 1$. Indeed, if $N_{p_i, (q_j, r_k)} = 1$, then $M_{p, (q, r)} = M_{p, (r, q)} = n_p > 0$, implying $N_{p_i, (r_k, q_j)} = 1$.
   
Let us show that there is a simple undirected graph $G$ such that $\cC = \cA(G)$.
Let $G = (V, E)$, where $V$ is the set of states of $\cC$, be the undirected graph such that there is an edge $(p_i, q_j)$ in $E$ if
there is a transition $p_i \to (q_j, r_k)$ in~$\cC$ for some $r_k \in V$.
This definition is consistent with the fact that the edge $(p_i, q_j)$ is undirected. Indeed, if $N_{p_i, (q_j, r_k)} = 1$, then $M_{p, (q, r)} = n_p \neq 0$.
Thus, $M_{q, (p, p)} = n_q \neq 0$, 
implying $N_{q_j, (p_i, p_i)} = 1$ and $(q_j, p_i)$ in $E$.

It is easy to check that $\cC = \cA(G)$. As a consequence, 
$X$ is conjugate to $X_{\cA(G)}$, itself conjugate to the Hom tree-shift $X(G)$.

\end{proof}

\begin{proposition} \label{proposition.tree.conjugacy2}
It is decidable whether an edge tree shift is conjugate to a Hom tree-shift.
\end{proposition}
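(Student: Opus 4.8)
The plan is to transpose the one-sided argument (Proposition~\ref{proposition.amalgamation1} together with its corollary) to the tree setting, using the tree analogues established above. The statement concerns an edge tree-shift, presented by an edge tree automaton $\cA$. First I would arrange that $\cA$ is \emph{trim}: any state carrying no outgoing transition can never be visited by an infinite computation tree, so deleting all such states (iterating, since a deletion may expose new ones) leaves the accepted tree-shift unchanged and produces a trim automaton. The same reduction applies more generally: any tree-shift of finite type is conjugate to an edge tree-shift by the earlier proposition, so it suffices to decide, for a trim edge tree automaton $\cA$, whether $X_\cA$ is conjugate to a Hom tree-shift.

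The crux is the characterization: $X_\cA$ is conjugate to a Hom tree-shift if and only if the total amalgamation of $\cA$ is regular. For the forward direction, suppose $X_\cA$ is conjugate to $X(H)$ for some simple undirected graph $H$; discarding isolated vertices of $H$, which can never label a node of an infinite tree in $X(H)$, we may assume $\cA(H)$ is trim. By the lemma identifying $X(H)$ with $X_{\cA(H)}$, the shift $X_\cA$ is conjugate to $X_{\cA(H)}$, so Proposition~\ref{proposition.tree.williams2} (both automata being trim) gives that $\cA$ and $\cA(H)$ have the same total amalgamation. Since $\cA(H)$ is regular and regularity is preserved under amalgamation (Proposition~\ref{proposition.tree.regular}), this common total amalgamation is regular. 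For the converse, if the total amalgamation $\mathcal{K}$ of $\cA$ is regular, then $\mathcal{K}$ is trim (amalgamation preserves the existence of outgoing transitions) and admits itself as its own total amalgamation, so $\cA$ and $\mathcal{K}$ share the same total amalgamation; Proposition~\ref{proposition.tree.williams2} makes $X_\cA$ conjugate to $X_{\mathcal{K}}$, and Proposition~\ref{proposition.tree.hom} makes $X_{\mathcal{K}}$ conjugate to a Hom tree-shift. Composing the conjugacies proves the claim.

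Decidability then follows from the effectiveness of both ingredients. The total amalgamation is computed by repeatedly locating a class of states whose columns in the transition matrix agree in both the left and right positions (the two conditions defining a general amalgamation) and merging it, halting when no merge is possible; Proposition~\ref{proposition.tree.williams1} guarantees that the result is independent of the choices made. Testing regularity of the resulting automaton is a finite check of the three defining conditions against its adjacency matrix. For complexity, the transition matrix of an $n$-state automaton is indexed by pairs of states and hence has size $O(n^3)$; each merging round can be organized by a lexicographic sort grouping states with identical column profiles, costing polynomial time, and there are at most $n-1$ rounds, so the whole procedure runs in polynomial time and space.

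All of the real work is carried by the three supporting propositions, so there is no single hard obstacle; the only point demanding care is the forward direction, where one must transport regularity from $\cA(H)$ to the total amalgamation \emph{through} the amalgamation process, via Proposition~\ref{proposition.tree.regular}, rather than expecting it to survive the conjugacy directly. Indeed regularity is not preserved by out-splitting, so it cannot simply be read off an arbitrary trim automaton conjugate to $X_\cA$; the decision must be made on the canonical representative, the total amalgamation.
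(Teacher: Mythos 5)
Your proposal is correct and follows essentially the same route as the paper: reduce to a trim edge tree automaton, characterize conjugacy to a Hom tree-shift by regularity of the total amalgamation via Propositions~\ref{proposition.tree.williams2}, \ref{proposition.tree.regular} and \ref{proposition.tree.hom}, and then decide by computing the total amalgamation and checking regularity. The only differences are that you make explicit two small points the paper leaves implicit (trimming $\cA$, and discarding isolated vertices of $H$ so that $\cA(H)$ is trim), which is a welcome but not substantive addition.
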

\begin{proof}
Let $X$ be an edge tree-shift. Without loss of generality, we assume that it is defined by a trim edge tree automaton $\cA$.
Let us show that $X$ is conjugate to a Hom tree shift if and only if the total amalgamation of $\cA$ is 
regular.

If $X$ is conjugate to a Hom tree-shift $Y$ defined a trim automaton $\cB = \cA(G)$, by Proposition \ref{proposition.tree.williams2},
$\cA$ and $\cB$ have the same total amalgamation $\cD$, up to a renaming of the states. 
By Proposition \ref{proposition.tree.regular}, this total amalgamation is regular.
Conversely, if the total amalgamation of $\cA$ is regular, by Proposition \ref{proposition.tree.hom}, this amalgamation defines an edge tree-shift conjugate
to a Hom shift.

The time and space complexity for deciding the property is polynomial in the size of $\cA$.

\end{proof}

\begin{proposition} \label{proposition.tree.isomorphic}
Two Hom tree shifts defined by the edge tree automata $\cA(G)$ and $\cA(H)$, respectively, 
are conjugate if and only if $G$ and $H$ are isomorphic as undirected graphs.
\end{proposition}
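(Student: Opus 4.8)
The plan is to follow the template of the proof of Proposition~\ref{proposition.amalgamation2}, replacing the column analysis of the sequence case with the pair-indexed transition matrices of tree automata. The forward implication is immediate: a graph isomorphism $\psi\colon G\to H$ induces the $1$-block code that relabels each node label $p$ by $\psi(p)$, and this is a conjugacy from $X(G)$ onto $X(H)$, with inverse induced by $\psi^{-1}$. So the work lies in the converse, namely that conjugacy forces $G\cong H$.

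For the converse I would first pass to total amalgamations. Since $X(G)$ is conjugate to $X_{\cA(G)}$ and $X(H)$ to $X_{\cA(H)}$, and since both automata are trim (an isolated vertex of $G$ occurs in no tree of $X(G)$, so we may assume $G$ and $H$ have none), Proposition~\ref{proposition.tree.williams2} shows that $\cA(G)$ and $\cA(H)$ have, up to renaming of states, a common total amalgamation. The crux is then to compute this total amalgamation in purely graph-theoretic terms and to argue that it pins down $G$ up to isomorphism.

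The key computation is the merging criterion. Writing $M$ for the transition matrix of $\cA(G)$, one has $M_{p,(q,r)}=1$ exactly when $q$ and $r$ are both neighbours of $p$ in $G$. I would check that two states $q_i,q_j$ satisfy conditions~\ref{condition.merge.1} and~\ref{condition.merge.2} if and only if $N(q_i)=N(q_j)$ in $G$: the entry of column $(q_i,p)$ at row $s$ is $[s\sim q_i]\,[s\sim p]$, and trimness (so $N(s)\neq\emptyset$) lets one isolate the factor $[s\sim q_i]$ by choosing $p$ to be a neighbour of $s$, so that the column equalities collapse to $\{s: s\sim q_i\}=\{s: s\sim q_j\}$. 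I would then perform the single general amalgamation that merges each class of vertices sharing a common neighbourhood, obtaining an automaton $K'$, and repeat the same computation one level up to show that two distinct classes can never be merged again (once more using trimness to exhibit a class contained in a prescribed nonempty neighbourhood). Hence $K'$ is the total amalgamation $K$: its states are the neighbourhood classes of $G$, the common nonzero row value $n_S$ of a class $S$ equals $|S|$, and $N_{S,(Q,R)}\neq 0$ precisely when representatives of $Q$ and $R$ are neighbours of $S$.

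Finally I would recover $G$ from $K$ and compare. The reduced data read off $N$ consists of the classes, their sizes $|S|$, and the symmetric relation ``$Q$ is a neighbour of $S$'', including the self-adjacency $S\subseteq N(S)$ recorded by $N_{S,(S,S)}\neq 0$. One reconstructs $G$ by expanding each class $S$ into $|S|$ vertices, joining every vertex of $S$ to every vertex of an adjacent class $Q$, and making $S$ internally a clique with loops when $S\subseteq N(S)$ and an independent set otherwise; a short case analysis --- testing whether the common neighbourhood of a class contains the class --- shows this expansion is exactly $G$ up to isomorphism. Since $\cA(G)$ and $\cA(H)$ have the same total amalgamation, the sizes $|V_i|=|W_i|$ and the reduced adjacencies agree, so the two reconstructions coincide and $G\cong H$. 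The main obstacle is precisely this last bookkeeping around self-loops: showing that a neighbourhood class is internally either a loop-clique or a loop-free independent set, and that this dichotomy, together with the class sizes and reduced adjacency, is faithfully encoded in the total amalgamation so that the reconstruction is unambiguous.
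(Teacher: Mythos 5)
Your proposal is correct and follows essentially the same route as the paper: pass to the common total amalgamation via Proposition~\ref{proposition.tree.williams2}, show that it is reached by a single general amalgamation whose classes are exactly the sets of vertices with identical neighbourhoods (your translation of the column-equality condition into $N(q_i)=N(q_j)$ is the same partition the paper uses), verify that no further merging is possible, and then recover $G$ from the class sizes and the reduced adjacency. Your treatment of the within-class dichotomy (loop-clique versus loop-free independent set, read off from $N_{S,(S,S)}$) just makes explicit the reconstruction step that the paper states more tersely.
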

\begin{proof} 
Let $\cA(G)$ and $\cA(H)$ be the edge tree automata corresponding to $G$ and $H$, respectively, and let $X(G)$, $X(H)$ be the corresponding
tree-shifts. The tree automata $\cA(G)$ and $\cA(H)$ are regular and trim.
By Proposition \ref{proposition.tree.conjugacy2}, if $X(G)$ and $X(H)$ are conjugate, there is a regular tree automaton $\cC$ such that
$\cC$ is the total amalgamation of $\cA(G)$ and is the total amalgamation of $\cA(H)$.

Let $M$ be the adjacency matrix of $\cA(G)$. 
Let us perform an initial general amalgamation $\cB$ of $\cA(G)$, obtained using a partition $(V_1, \ldots, V_\ell)$ of the states of $\cA(G)$. 
This partition is defined according to the equivalence relation $p \sim q$ if and only if $M_{s, (p, r)} = M_{s, (q, r)}$ for all states $r, s$, and $M_{s, (r, p)} = M_{s, (r, q)}$ for all states $r, s$.

We assume that all states $p_{i,1} , \ldots p_{i, i_{n_i}}$ of $V_i$ are merged into a state $p_{i}$ in~$\cB$.
Since $\cA(G)$ is regular and since 
$M$ has $0{-}1$ entries, if $M_{p, (q, r)} = 1$ for some $p \in V_i$, $q \in V_j$, $r \in V_k$ then 
$M_{p, (q', r')} = 1$ for any $q' \in V_j$, $r' \in V_k$. This implies that
$M_{q', (p, p)} = 1$ and $M_{r', (p, p)} = 1$ for any $q' \in V_j$, $r' \in V_k$, and 
$M_{q', (p', p')} = 1$, $M_{r', (p', p')} = 1$ for any $p' \in V_i$, $q \in V_j$, $r \in V_k$.
Hence, $M_{p', (q', r')} = 1$ for any $p' \in V_i$, $q' \in V_j$, $r' \in V_k$. 
As shown in the proof of Proposition~\ref{proposition.tree.regular},
the adjacency matrix $N$ of the amalgamation \( \cB \) satisfies the following:
either $N_{p_i, (p_j, p_k)} = 0$ or $N_{p_i, (p_j, p_k)}= n_i$, where $n_i = |V_i|$.

Assume that the two (distinct) states $p_i$ and $p_j$ can be merged through another amalgamation after this round.
This implies that $N_{p_k, (p_i, p_\ell)} = N_{p_k, (p_j, p_\ell)}$ for any states $p_k, p_\ell$ in $\cB$.
Hence, $M_{p, (q, r)} = M_{p', (q', r')}$ for some $p, p' \in V_k$, $q \in V_i, q' \in V_j$, $r,r' \in V_\ell$,
implying  $M_{p, (q, r)} = M_{p, (q', r')}$ for any $p$, $q \in V_i$, $q' \in V_j$, $r, r' \in V_\ell$.
This implies $V_i = V_j$ by definition of the partition $(V_1, \ldots, V_\ell)$.   

It follows that $\cB$ is the total amalgamation $\cC$ of $\cA(G)$. 

Similarly, we consider the partition $(W_1, \ldots, W_{\ell'})$ of the states of $\cA(H)$ defined according to the equivalence relation $p \sim q$ if and only 
if $P_{s, (p, r)} = P_{s, (q, r)}$ for all states $r, s$,
 and $P_{s, (r, p)} = P_{s, (r, q)}$ for all states $r, s$, where $P$ is the adjacency matrix of $\cA(H)$.
After renumbering the sets $W_i$, we may assume that all states of $W_i$ are amalgamated to $p_i$ into the tree automaton $\cC$.
Thus, $\ell = \ell'$ and either $N_{p_i, (p_j, p_k)} = 0$ or $N_{p_i, (p_j, p_k)} = |V_i| = |W_i|$, implying
$|V_i|= |W_i|$ for each $i$. 
There is a transition from each state in $V_i$ to each pair $(q, r)$ with $q \in V_j, r \in V_k$ in $\cA(G)$ if and only if there is a transition from each state in $W_i$  to each pair $(q, r)$ with $q \in W_j, r \in W_k$ in $\cA(H)$.  Hence, $G = H$, up to a renaming of the states.
\end{proof}

\section{Directed Hom tree shifts}  \label{section.directed.hom.tree.shifts}

In this section, we consider the class of tree shifts defined by a simple (unlabeled) directed graph $G$ (where self-loops are allowed), referred to as directed Hom tree-shifts. These shifts are called hom tree-SFT in \cite{BanEtAl2021}.

Let $G=(V, E)$ be a directed graph. The \emph{directed Hom tree shift} defined by $G$, denoted $X(G)$, is the set of trees $t$ labeled on $V$ such that for each node $u$ and each $i \in \{0, 1 \ldots, d-1\}$, $(t_u, t_{ui})$ is a (directed) edge of $G$. This notion is thus weaker than the notion of (undirected) Hom tree-shift, as defined
in Section \ref{section.hom.tree.shifts}.
Without loss of generality, we may consider below only trees of arity $d = 2$.

A directed Hom tree-shift is a tree-shift of finite type. 
It is accepted by the tree automaton $\cB = (V, \Delta)$, whose transitions are $(p, p) \to (q, r)$, where $(p, q), (p, r) \in E$. 

Let $\cA(G)$ be the edge tree automaton obtained from $\cB$ by labeling all transitions with distinct labels. The tree-shift $X(G)$ is conjugate to the edge tree-shift accepted by $\cA(G)$.    
   
We say that an edge tree automaton $\cA=(Q, \Delta)$ with adjacency matrix $M$ is \emph{symmetric} if for all states $p, q, r$:
\begin{itemize}
\item $M_{p, (q, r)}  = M_{p, (r, q)}$.
\end{itemize}

Note that the adjacency matrix of $\cA(G)$ is symmetric.

\begin{proposition} \label{proposition.symmetric}
The amalgamation of a symmetric edge tree automaton is a symmetric edge tree automaton.
\end{proposition}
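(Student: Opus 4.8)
The plan is to mirror the proof of Proposition~\ref{proposition.tree.regular}: reduce to an elementary amalgamation, express the new adjacency matrix $N$ entrywise in terms of the old matrix $M$, and then read off symmetry directly from the symmetry of $M$. Since any amalgamation is obtained by a sequence of elementary out-mergings, it suffices to treat the case where two states $q_i, q_j$ of the symmetric edge tree automaton $\cA$ (with adjacency matrix $M$) are merged into a single state $q_{ij}$, producing an automaton $\cB$ with adjacency matrix $N$. I would introduce the projection $\pi$ with $\pi(q_i) = \pi(q_j) = q_{ij}$ and $\pi(p) = p$ otherwise, and recall that the two merging conditions (columns $(q_i,p)$, $(q_j,p)$ identical, and columns $(p,q_i)$, $(p,q_j)$ identical) guarantee that, for any representatives $q', r'$ with $\pi(q') = b$ and $\pi(r') = c$, the value $M_{p, (q', r')}$ is independent of the choice of $q', r'$; this is exactly what makes the entries of $N$ well-defined.

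First I would record the explicit formula for $N$. For a state $a \neq q_{ij}$ (so $a = p$ with $p \neq q_i, q_j$) one has $N_{a, (b, c)} = M_{a, (q', r')}$ for any representatives $q', r'$ of $b, c$, while for the merged state $N_{q_{ij}, (b, c)} = M_{q_i, (q', r')} + M_{q_j, (q', r')}$, since the row of $q_{ij}$ is the sum of the rows of $q_i$ and $q_j$. The goal is then to verify $N_{a, (b, c)} = N_{a, (c, b)}$ for all states $a, b, c$ of $\cB$.

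The verification splits into two short cases. When $a \neq q_{ij}$, I choose the same representatives $q'$ of $b$ and $r'$ of $c$ in both expressions, so that $N_{a, (b, c)} = M_{a, (q', r')}$ and $N_{a, (c, b)} = M_{a, (r', q')}$; these agree by the symmetry of $M$. When $a = q_{ij}$, each summand is individually symmetric, $M_{q_i, (q', r')} = M_{q_i, (r', q')}$ and $M_{q_j, (q', r')} = M_{q_j, (r', q')}$, so the two sums coincide. Hence $N$ is symmetric and $\cB$ is a symmetric edge tree automaton.

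I expect the only delicate point to be bookkeeping rather than a genuine obstacle: one must check that transposing a pair $(b, c) \mapsto (c, b)$ commutes with the column identification performed during merging. This holds because $\pi$ acts identically on the first and second coordinates of each pair, so the set of old column indices collapsed onto $(b, c)$ is precisely the transpose of the set collapsed onto $(c, b)$. Consequently there is no interaction between the row-summing (which preserves entrywise symmetry automatically) and the column-collapsing that could break symmetry, and the general amalgamation case follows by iterating the elementary one.
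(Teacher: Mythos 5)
Your proposal is correct and follows essentially the same route as the paper: reduce to an elementary out-merging, introduce the projection $\pi$ collapsing the two merged states, write the entries of $N$ as single entries (respectively sums of two entries) of $M$, and conclude by applying the symmetry $M_{p,(q',r')}=M_{p,(r',q')}$ termwise in the two cases $a\neq q_{ij}$ and $a=q_{ij}$. Your closing remark that transposition of column pairs commutes with the column identification is exactly the well-definedness point implicit in the paper's argument, so nothing further is needed.
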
 
\begin{proof}
Assume that two states $p_i, p_j$ of a symmetric edge tree automaton $\cA$ are merged into a state $p_{ij}$ in an edge tree automaton $\cB$. Let $M, N$ be the adjacency matrices of $\cA$ and $\cB$, respectively.

Thus, the columns of indices $(p_i, p)$ and $(p_j, p)$ (respectively $(p, p_i)$ and $(p, p_j)$) of the adjacency matrix $M$ of $\cA$ are identical for each state $p$. For a state $p$ of $\cA$, let $\pi(p)$ denote the state $p$ if $p \neq p_i, p_j$, and $\pi(p_i) = \pi(p_j) = p_{ij}$. 

Let us show that $N_{\pi(p), (\pi(q), \pi(r))} = N_{\pi(p), (\pi(r), \pi(q))}$.

If $N_{\pi(p), (\pi(q), \pi(r))} > 0$, then there exist $p'\in V$ with $\pi(p') = \pi(p)$, such that for all $q', r' \in V$ with $\pi(q') = \pi(q)$, $\pi(r') = \pi(r)$, $M_{p', (q', r')}  > 0$. This implies that $M_{p', (r', q')} > 0$ and $M_{p', (q', r')} = M_{p', (r', q')}$, implying
$N_{\pi(p), (\pi(r), \pi(q))} > 0$. 

If $\pi(p) \neq p_{ij}$, then $N_{\pi(p), (\pi(q), \pi(r))} = M_{p, (q', r')}  = M_{p, (r', q')} = N_{\pi(p), (\pi(r), \pi(q))}$ for all $q', r' \in V$ with $\pi(q') = \pi(q)$, $\pi(r') = \pi(r)$.

If $\pi(p) =  p_{ij}$, then $p = q_i$ or $p = q_j$, and $p' = q_i$ or $p' = q_j$.
Thus, for all $q', r' \in V$ with $\pi(q') = \pi(q)$, $\pi(r') = \pi(r)$, $N_{\pi(p), (\pi(q), \pi(r))} = M_{p_i, (q', r')} + M_{p_j, (q', r')} = M_{p_i, (r', q')} + M_{p_j, (r', q')} = N_{\pi(p), (\pi(r), \pi(q))}$.

Hence, $\cB$ is symmetric.
\end{proof}

\begin{proposition} \label{proposition.symmetric2}
A tree-shift accepted by a symmetric edge tree automaton is conjugate to a directed Hom tree-shift.
\end{proposition}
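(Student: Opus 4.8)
The plan is to mimic the structure of the proof of Proposition \ref{proposition.tree.hom}, replacing ``regular'' by ``symmetric'' and noting that a symmetric edge tree automaton is already, in a sense, closer to the form $\cA(G)$ for a directed graph $G$ than a regular automaton is to the undirected case. Let $X$ be the tree-shift accepted by a symmetric edge tree automaton $\cB = (Q, \Delta)$ with adjacency matrix $M$. The key observation is that symmetry, $M_{p,(q,r)} = M_{p,(r,q)}$, is exactly what is needed to read off a directed graph from the transitions: whenever there is a transition from $p$, the pair of target states appears unordered, so I can define the directed edge relation by $(p, q) \in E$ precisely when there exists some $r$ with $M_{p,(q,r)} > 0$. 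Unlike the undirected case of Proposition \ref{proposition.tree.hom}, there is no need for the third regularity condition forcing a returning edge, since a directed graph imposes no such symmetry between $(p,q)$ and $(q,p)$.

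The main obstacle is that a symmetric automaton need not have $0\text{-}1$ entries, nor need its nonzero row entries be equal, so I cannot directly identify $\cB$ with some $\cA(G)$. First I would perform an out-splitting to reduce to the $0\text{-}1$ case: split each state $p$ into copies indexed so that each transition $p \xrightarrow{n}(q,r)$ is replaced by single transitions among the copies, obtaining an edge tree automaton $\cC$ whose adjacency matrix $N$ has $0\text{-}1$ entries. By Proposition \ref{proposition.tree.splitting}, $X$ is conjugate to the tree-shift $X_{\cC}$. The subtlety is that I must perform this splitting so that symmetry is preserved in $\cC$: I want $N_{p_i,(q_j, r_k)} = N_{p_i,(r_k, q_j)}$, which follows because $M_{p,(q,r)} = M_{p,(r,q)}$ ensures the transitions to $(q,r)$-copies and to $(r,q)$-copies can be put in bijection. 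Care is needed to split consistently across the left and right target coordinates, but unlike the regular case I do not need to enforce any returning-edge condition during the split.

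Once $\cC$ is a symmetric $0\text{-}1$ edge tree automaton, I would define the directed graph $G = (V, E)$ with $V$ the set of states of $\cC$ and $(p, q) \in E$ whenever there is some $r$ with $N_{p,(q,r)} = 1$. I then claim $\cC = \cA(G)$: by symmetry, $N_{p,(q,r)} = 1$ implies $N_{p,(r,q)} = 1$, so both $(p,q)$ and $(p,r)$ lie in $E$, and conversely $(p,q),(p,r) \in E$ together with the $0\text{-}1$ symmetric structure gives $N_{p,(q,r)} = 1$; this is exactly the transition rule $(p,p) \to (q,r)$ defining $\cA(G)$. The one point to verify carefully is that the product structure holds, namely that $(p,q), (p,r) \in E$ really does force $N_{p,(q,r)} = 1$ rather than merely the two ``diagonal'' entries being present; I expect this to hold because the splitting in the previous step was designed precisely to make each state's outgoing transitions form a full product $S(p) \times S(p)$ of target copies. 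Concluding, $X$ is conjugate to $X_{\cC} = X_{\cA(G)}$, which is conjugate to the directed Hom tree-shift $X(G)$, as desired.
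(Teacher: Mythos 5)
Your overall strategy is the same as the paper's: out-split the symmetric automaton $\cB$ into an edge tree automaton $\cC$ with $0$-$1$ entries, read a directed graph $G$ off the transitions of $\cC$, and conclude by identifying $\cC$ with $\cA(G)$. One difference in detail: the paper splits each state $p$ into states $p_{\{q,r\},i}$ indexed by an \emph{unordered target pair} $\{q,r\}$ with $M_{p,(q,r)}>0$ together with a multiplicity index $1\le i\le M_{p,(q,r)}$ (using symmetry to pair the transitions to $(q,r)$ with those to $(r,q)$), whereas you split only by multiplicity. But the decisive point is exactly the one you flag at the end and then wave away: that $(p,q),(p,r)\in E$ really forces $N_{p,(q,r)}=1$, i.e., that the target pairs of each state of $\cC$ form a full product $S(p)\times S(p)$. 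Your splitting is not ``designed precisely'' to achieve this, and no out-splitting can: an out-splitting only partitions the existing out-transitions of a state and fans their targets out over the copies of the target states, so the set of target pairs of a copy of $p$, projected back to the states of $\cB$, is always a subset of the set of target pairs of $p$ in $\cB$. Symmetry of $\cB$ guarantees only that this set is closed under swapping the two coordinates; it does not make it rectangular.

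Concretely, take $\cB$ trim and symmetric with states $p,q,r$ and single transitions $p\to(q,r)$, $p\to(r,q)$, $q\to(p,p)$, $r\to(p,p)$. In any out-splitting $\cC$ of $\cB$, every copy of $p$ has target pairs only of the form (copy of $q$, copy of $r$) or (copy of $r$, copy of $q$); it never has a target pair (copy of $q$, copy of $q$). Yet with your definition of $E$, such a copy of $p$ has $G$-edges both to copies of $q$ and to copies of $r$, so $\cA(G)$ \emph{does} contain a transition from it to a pair (copy of $q$, copy of $q$). Hence $\cC\neq\cA(G)$ and $X_{\cC}$ is a proper subshift of $X_{\cA(G)}$: in $X_{\cC}$ a $p$-node never has two $q$-children, whereas in a directed Hom tree-shift the two children of a node are constrained independently. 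So the final identification $\cC=\cA(G)$, which carries the whole weight of your argument, fails; the full-product (rectangularity) property of the outgoing transitions is a genuine additional condition beyond symmetry, and since the projected target-pair sets are invariant under out-splitting, it cannot be recovered by choosing the splitting more cleverly.
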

\begin{proof}
Let $X$ be the tree-shift accepted by a symmetric edge tree automaton $\cB= (Q, \Delta)$ with adjacency matrix $M$.
For all states $p, q, r$ of $\cB$ such that $M_{p, (q, r) > 0}$,
we split the state $p$ into the states $p_{\{q, r\}, 1}, \ldots p_{\{q, r\}, M_{p,(q,r)}}$ in an edge tree automaton
$\cC$. 
If $M_{p,(q,r)} > 0$, then there are in $\cC$ transitions 
$p_{\{q,r\}, i}  \to (q_{\{q', r'\}, j}, r_{\{q'', r''\}, k})$ 
and
$p_{\{q,r\}, i}  \to (r_{\{q'', r''\}, j}, q_{\{q', r'\}, k})$ 
for all $1 \leq i \leq M_{p,(q,r)}$, $1 \leq j \leq M_{q,(q',r')}$, and $1 \leq k \leq M_{r,(q",r")}$,
for all $q', r', q'', r''$ such that $M_{q,(q',r')}, M_{r,(q'',r'')} > 0$.
The automaton $\cC$ is thus an edge tree automaton (where all transitions have distinct labels) that has exactly one transition from 
the state $p_{\{q,r\}, i}$ to the pair $(q_{\{q', r'\}, j}, r_{\{q'', r''\}, k})$. Note that the definition is consistent since
$M_{p,(q,r)}  = M_{p,(r, q)}$ for all states $p, q, r$.

Let $N$ be the adjacency matrix of $\cC$.
Note that if there exist $i, j, k, q', r', q'', r''$ with $1 \leq i \leq M_{p,(q,r)}$, $1 \leq j \leq M_{q,(q',r')}$ and $1 \leq k \leq M_{r,(q'',r'')}$ such that
$N_{p_{\{q,r\}, i}, (q_{\{q', r'\}, j}, r_{\{q'', r''\}, k})} = 1$,
then $N_{p_{\{q,r\}, i}, (q_{\{q', r'\}, j}, r_{\{q'', r''\}, k})} = 1$
for all $i, j, k, q', r', q'', r''$ with $1 \leq i \leq M_{p,(q,r)}$, $1 \leq j \leq M_{q,(q',r')}$
and $1 \leq k \leq M_{r,(q'',r'')}$.

Let us show that there is a simple directed graph $G$ such that $\cC = \cA(G)$.
Let $G = (V, E)$, where $V$ is the set of states of $\cC$, be the directed graph such that there is an edge 
$(p_{\{q, r\}, i}, q_{\{u, v\}, j})$ in $E$ if
there is a transition from $p_{\{q, r\}, i}$  to $(q_{\{u, v\}, j}, r_{\{u', v'\}, k})$ in $\cC$ for some $r, u', v'  \in V$.
One can check that $\cC = \cA(G)$. As a consequence, 
 $X$ is conjugate to $X_{\cA(G)}$, itself conjugate to the directed Hom tree-shift $X(G)$.
\end{proof}

\begin{proposition} \label{proposition.directed.tree.conjugacy2}
It is decidable whether an edge tree shift is conjugate to a directed Hom tree-shift.
\end{proposition}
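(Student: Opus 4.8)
The plan is to mirror the proof of Proposition~\ref{proposition.tree.conjugacy2}, replacing the regularity criterion by the symmetry criterion established in this section. Concretely, I would show that an edge tree-shift $X$, which we may assume is defined by a trim edge tree automaton $\cA$, is conjugate to a directed Hom tree-shift if and only if the total amalgamation of $\cA$ is symmetric. Because symmetry, the condition $M_{p,(q,r)} = M_{p,(r,q)}$, is a purely local property of the adjacency matrix, decidability then follows at once: the total amalgamation of $\cA$ is computable in polynomial time and space exactly as in Proposition~\ref{proposition.tree.conjugacy2}, and verifying that its adjacency matrix satisfies the symmetry condition is a straightforward polynomial check.

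For the forward direction, I would assume that $X$ is conjugate to a directed Hom tree-shift $Y = X(G)$ and recall that $Y$ is conjugate to the edge tree-shift accepted by $\cA(G)$, which we may take to be trim. By Proposition~\ref{proposition.tree.williams2}, the trim automata $\cA$ and $\cA(G)$ then share the same total amalgamation $\cD$, up to a renaming of the states. Since the adjacency matrix of $\cA(G)$ is symmetric by construction, and since symmetry is preserved under amalgamation by Proposition~\ref{proposition.symmetric}, the automaton $\cD$ is symmetric. Hence the total amalgamation of $\cA$ is symmetric.

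For the converse, I would assume that the total amalgamation $\cD$ of $\cA$ is symmetric. Since $\cA$ and $\cD$ are related by a sequence of amalgamations, Proposition~\ref{proposition.tree.splitting} yields that $X$ is conjugate to the edge tree-shift defined by $\cD$. By Proposition~\ref{proposition.symmetric2}, a tree-shift accepted by a symmetric edge tree automaton is conjugate to a directed Hom tree-shift, so the shift defined by $\cD$, and therefore $X$ itself, is conjugate to a directed Hom tree-shift. Combined with the complexity remark above, this gives the claimed decidability.

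The delicate point is not any single step but confirming that symmetry alone is the correct invariant: in the undirected case the criterion was regularity, which also encodes a uniform-multiplicity condition and a reciprocity condition, whereas here only $M_{p,(q,r)} = M_{p,(r,q)}$ is needed. I expect the genuine work to lie in the two preceding propositions I am invoking, namely that the refined state-splitting of Proposition~\ref{proposition.symmetric2} (indexed by unordered target pairs rather than by a single multiplicity $n_p$) reconstructs a directed graph $G$ with $\cC = \cA(G)$ from an arbitrary symmetric automaton, and that Proposition~\ref{proposition.symmetric} indeed shows the symmetric class to be closed under amalgamation so that the total amalgamation inherits symmetry. With both supplied, the only remaining obligations are the bookkeeping of trimness of $\cA(G)$ and the polynomial-time computation of the total amalgamation together with its symmetry test.
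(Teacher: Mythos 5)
Your proposal is correct and follows essentially the same route as the paper's own proof: reduce to a trim edge tree automaton, use Proposition~\ref{proposition.tree.williams2} together with Proposition~\ref{proposition.symmetric} for the forward direction, and Proposition~\ref{proposition.symmetric2} for the converse, with the polynomial-time computation of the total amalgamation and the symmetry check giving decidability. The only difference is that you make explicit the appeal to Proposition~\ref{proposition.tree.splitting} for the conjugacy between $X$ and the shift of the total amalgamation, which the paper leaves implicit.
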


\begin{proof}
Let $X$ be an edge tree-shift. Without loss of generality, we assume that it is defined by a trim edge tree automaton $\cA$.
Let us show that $X$ is conjugate to a directed Hom tree shift if and only if the total amalgamation of $\cA$ is 
symmetric.

If $X$ is conjugate to a directed Hom tree-shift $Y$ defined by a trim automaton $\cB = \cA(G)$, by Proposition \ref{proposition.tree.williams2},
$\cA$ and $\cB$ have the same total amalgamation $\cD$, up to a renaming of the states. 
By Proposition \ref{proposition.symmetric}, this total amalgamation is symmetric.
Conversely, if the total amalgamation of $\cA$ is symmetric, by Proposition \ref{proposition.symmetric2}, this amalgamation defines an edge tree-shift conjugate
to a directed Hom shift.

The time and space complexity for deciding the property is polynomial in the size of $\cA$.
\end{proof}

\begin{example}
Let $\cA$ be the (non-symmetric) tree automaton with states $p, q, r, s, u$ and transition matrix:
\begin{align*}
M &= 
\begin{blockarray}{cccccccccc}
 & (q, r) & (r, q) & (p, p) & (p, s) & (s, p) & (s, s) & (r, u) & (u, r)\\
\begin{block}{c(ccccccccc)}
  p & 1 & 0 & 0 & 0 & 0 & 0 & 1 & 1\\
  q & 0 & 0 & 1 & 1 & 1 & 1 & 0  & 0\\
  r & 0 & 0 & 0 & 0 & 0 & 0 & 1 & 1\\
  s & 0 & 1 & 0 & 0 & 0 & 0 & 1 & 1\\
  u & 1 & 1 & 0 & 0 & 0 & 0 & 0 & 0\\
\end{block}
\end{blockarray}.
\end{align*}
Its total amalgamation is obtained by merging states $p$ and $s$ into a state $(ps)$:
\begin{align*}
N &= 
\begin{blockarray}{cccccc}
 & (q, r) & (r, q) & ((ps), (ps))  & (r, u) & (u, r)\\
\begin{block}{c(ccccc)}
  (ps) & 1 & 1 & 0 & 2 & 2\\
  q & 0 & 0 & 1 & 0 & 0\\
  r & 0 & 0 & 0 & 1 & 1\\
  u & 1 & 1 & 0 & 0 & 0\\
\end{block}
\end{blockarray}.
\end{align*}   
The edge tree automaton $\cB$ with adjacency matrix $N$ is symmetric. Thus, the tree-shift $X$ accepted by $\cA$
is conjugate to the directed Hom tree shift defined by the following directed graph $G$ of Figure \ref{figure.directed}.

\vspace{-5mm}
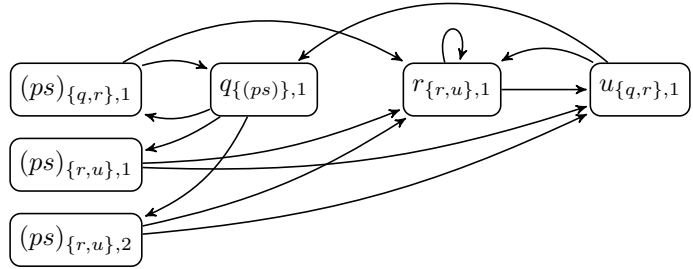
\begin{figure}[htbp]
\centering
    \begin{tikzpicture}[shorten >=1pt,node distance=2.5cm, every edge/.style={draw,->,>=stealth',auto,semithick},auto]
    \tikzset{every state/.append style={rectangle, rounded corners}}
      \node[state] (1) at (0,0) {$(ps)_{\{q,r\}, 1}$};
       \node[state] (11) at (0,-1) {$(ps)_{\{r,u\}, 1}$};
         \node[state] (111) at (0,-2) {$(ps)_{\{r,u\}, 2}$};
      \node[state, right of=1] (2) {$q_{\{(ps)\}, 1}$};
      \node[state, right of=2] (3) at (2.5,0) {$r_{\{r, u \}, 1}$};
        \node[state, right of=3] (5) {$u_{\{q, r \}, 1}$};
      \draw (1) edge[bend left=20] node {} (2);
     \draw (1) edge[bend left=30] node {} (3);
      \draw (2) edge[bend left=20] node {} (1);
       \draw (2) edge[bend left=10] node {} (11);
      \draw (2) edge[bend left=20] node {} (111);
      \draw (3) edge[loop above] node {} (3);
       \draw (3) edge node {} (5);
        \draw (5) edge[bend right=40] node {} (2);
        \draw (5) edge[bend right=30] node {} (3);
          \draw (11) edge[bend right=10] node {} (3);
           \draw (111) edge[bend right=10] node {} (3);
           \draw (11) edge[bend right=10] node {} (5);
           \draw (111) edge[bend right=10] node {} (5);
    \end{tikzpicture}
      \caption{The directed graph G defining the directed Hom tree-shift conjugate to $X$.}\label{figure.directed}
\end{figure}
\end{example}

\newpage 

\section*{Acknowledgments}
This work was supported by the Agence Nationale
de la Recherche (ANR-22-CE40-0011).
The second named author was supported by Cofund Math In Greater Paris, Marie Sk\l odowska-Curie Actions (H2020-MSCA-COFUND-2020-GA101034255).
\nocite*
\bibliographystyle{plain}
\bibliography{oneSidedHomShift}
\end{document}